\def\BibTeX{{\rm B\kern-.05em{\sc i\kern-.025em b}\kern-.08em
    T\kern-.1667em\lower.7ex\hbox{E}\kern-.125emX}}
\newtheorem{theorem}{Theorem}
\newtheorem{lemma}{Lemma}
\newtheorem{definition}{Definition}
\newtheorem{remark}{Remark}
\begin{document}

\title{Hybrid Lyapunov and Barrier Function-Based Control with Stabilization Guarantees}

\author{Hugo Matias, Daniel Silvestre
\thanks{This work was partially supported by the Portuguese Fundação para a Ciência e a Tecnologia 
(FCT) through the FirePuma project (DOI: 10.54499/PCIF/MPG/0156/2019), through the LARSyS FCT 
funding (DOI: 10.54499/LA/P/0083/2020, 10.54499/UIDP/50009/2020, and 10.54499/UIDB/50009/2020), 
and through the COPELABS, Lusófona University project UIDB/04111/2020.}
\thanks{H. Matias is with the Center of Technology and Systems (UNINOVA-CTS), NOVA School of 
Science and Technology (NOVA-FCT), 2829-516 Caparica, Portugal, and also with the Institute for 
Systems and Robotics (ISR-Lisbon), Instituto Superior Técnico (IST), 1049-001 Lisbon, Portugal 
(email: h.matias@campus.fct.unl.pt).}
\thanks{D. Silvestre is with the Center of Technology and Systems (UNINOVA-CTS), NOVA School of 
Science and Technology (NOVA-FCT), 2829-516 Caparica, Portugal, with the Institute for 
Systems and Robotics (ISR-Lisbon), Instituto Superior Técnico (IST), 1049-001 Lisbon, Portugal, and 
also with the COPELABS, Lusófona University, 1749-024 Lisbon, Portugal (e-mail: 
dsilvestre@isr.tecnico.ulisboa.pt).}}

\maketitle


\begin{abstract}
Control Lyapunov Functions (CLFs) and Control Barrier Functions (CBFs) can be combined, typically 
by means of Quadratic Programs (QPs), to design controllers that achieve performance and safety 
objectives. However, a significant limitation of the standard CLF-CBF-based framework is the 
emergence of unwanted equilibrium points that lead to deadlock situations and preclude task 
completion, even for simple systems and convex unsafe sets. To overcome this limitation, we propose 
a hybrid CLF-CBF control strategy that ensures safe global asymptotic stabilization, providing a 
more flexible and systematic design approach compared to current alternatives in the literature. 
The proposed method is first presented for first-order systems and then extended to higher-order 
systems via a joint CLF-CBF backstepping approach. The proposed solution is assessed through 
several simulation examples.
\end{abstract}

\begin{IEEEkeywords}
Nonlinear Systems, Constrained Control, Hybrid Systems, Control Lyapunov Functions, Control Barrier 
Functions, Backstepping
\end{IEEEkeywords}


\section{Introduction}

\IEEEPARstart{S}{afety} is becoming an increasingly important consideration in modern control 
systems as these systems are being deployed in numerous real-world applications. Several control 
tasks require the design of controllers that achieve performance objectives, such as stabilization 
to a fixed point, while ensuring the system remains within a safe region throughout the control 
process. Main examples of such applications include obstacle avoidance for autonomous vehicles, 
automatic cruise control with lane keeping for automotive vehicles, and dynamic walking on uneven 
terrain for legged robots \cite{pandey2017mobile}, \cite{he2019adaptive}, \cite{reher2021dynamic}. 
However, maintaining the system within a specified region of the state space can also be useful for 
avoiding areas prone to significant disturbances, safe learning of system dynamics, and adaptive 
safety in the presence of parametric model uncertainty \cite{brunke2022safe}, 
\cite{hovakimyan20111}. 


\subsection{Literature Review}

Designing feedback laws that provide both asymptotic stabilization and safety guarantees is a 
complex task. Traditionally, this has been tackled using potential field methods, which have seen 
widespread use, particularly in the robotics field \cite{krogh1984generalized}, 
\cite{khatib1986real}. A potential field is a scalar function whose gradient drives the system 
toward the desired equilibrium point and away from the unsafe set. However, a significant drawback 
of potential field methods is the emergence of undesired equilibrium points that trap the system 
and prevent global convergence. To overcome this issue, Navigation Functions (NFs) have been 
proposed as an improved potential field method, achieving almost global asymptotic stabilization in 
sphere worlds by ensuring that the unwanted equilibrium points are unstable 
\cite{koditschek1990robot}, \cite{rimon1992exact}. Nonetheless, these methods are mostly limited to 
single-integrator dynamics and lack the flexibility to handle additional constraints. Also, they 
are prone to other drawbacks, such as oscillatory behavior, conservative trajectories, and 
difficult parameter tuning \cite{koren1991potential}.

Over the last decades, Model Predictive Control (MPC) has become a popular control technique, used 
in numerous applications. MPC optimizes a cost function over a prediction horizon at each discrete 
time step while directly incorporating state and input constraints \cite{schwenzer2021review}. 
However, nonlinear system dynamics and nonconvex sets for the admissible states render MPC into the 
class of nonconvex optimization with all its challenges. For instance, ellipsoidal representations 
of the unsafe sets result in nonconvex quadratic constraints in the MPC formulation 
\cite{silvestre2023model}. Conversely, if the unsafe sets are modeled as polytopes, the 
optimization problem becomes a mixed-integer program \cite{stoican2022mixed}. Consequently, despite 
its predictive advantages, MPC adds a significant computational load for real-time applications, 
even when resorting to convexification techniques \cite{mao2017successive}, \cite{gros2020linear}, 
\cite{taborda2024convex}.

In the past decade, Control Barrier Functions (CBFs) have emerged as a novel tool for designing 
controllers with formal safety guarantees for nonlinear systems \cite{ames2019control}. CBFs 
generalize Control Lyapunov Functions (CLFs) \cite{sontag1983lyapunov} for safety, where the key 
point is imposing a Lyapunov-like condition on the CBF time derivative to make the safe set forward 
invariant. Initially, CBFs were formulated as reciprocal barrier functions, which have the 
disadvantage of being unbounded at the boundary of the safe set, causing numerical problems 
\cite{ames2014control}. Later on, CBFs were introduced in the form of zeroing barrier functions. 
These become zero at the boundary of the safe set and provide better numerical properties, becoming 
standard in the field \cite{xu2015robustness}.

For control-affine systems, CLFs and CBFs can be unified by means of Quadratic 
Programs (QPs), effectively combining stabilization and safety objectives in a suitable control 
framework \cite{ames2016control}. Under this approach, the conditions for asymptotic stabilization 
and safety arise from the time derivatives of the CLF and CBF, which depend linearly on 
the control input. As a result, controllers can be synthesized using QPs with linear 
inequality constraints, which admit closed-form solutions \cite{li2023graphical}. The effectiveness 
of this method has been demonstrated across several applications \cite{hsu2015control}, 
\cite{wang2016safety}, \cite{wang2018safe}, \cite{taylor2020adaptive}. 

Since the original approach assumes that the control input directly affects the first-order time 
derivative of the CBF, some research has focused on its extension to higher-order systems. One 
direction involves High-Order Control Barrier Functions (HOCBFs), where a Lyapunov-like safety 
condition is imposed on a higher-order time derivative of the CBF that is affected by the input 
\cite{nguyen2016exponential}, \cite{xiao2021high}, \cite{tan2021high}. Alternatively, one can 
devise CBFs for higher-order systems via backstepping - a well-established technique for CLF design 
in cascaded systems \cite{sepulchre2012constructive}, which has been adapted for CBFs 
\cite{taylor2022safe}. This method simplifies the design process by recursively building a CBF for 
the full system from a CBF designed only for the top-level subsystem.

Research has also focused on extending this framework to handle more complex safety objectives. 
Some works directly incorporate multiple CBFs into the control design by enforcing multiple CBF 
constraints within the QP formulation \cite{rauscher2016constrained}, 
\cite{breeden2023compositions}. Other works merge complex safety requirements into a single CBF, 
usually via Boolean logic operations, such as AND, OR, and negation, initially established using 
nonsmooth CBFs \cite{glotfelter2017nonsmooth}, \cite{ glotfelter2020nonsmooth}. Most recently, in 
\cite{molnar2023composing}, the authors propose an algorithmic scheme to design a single smooth CBF 
through Boolean logic and smooth maximum and minimum approximations, which can address multiple 
logical compositions of safety constraints.

However, the standard CLF-CBF-QP-based framework has a major drawback. While it enforces the 
forward invariance of the safe set with a hard constraint, it relaxes the stabilization objective 
to maintain the feasibility of the QP. This leads to the emergence of additional equilibrium 
points beyond the zero of the CLF. In particular, undesired asymptotically stable points can 
appear at the boundary of the safe set, leading to deadlocks and hindering task completion 
\cite{reis2020control}, \cite{grover2020does}. This closely mirrors the similar problem found in 
potential field methods.

Some works have addressed deadlock resolution in the CBF-based context. One of them introduces the 
concept of Control Lyapunov Barrier Functions (CLBFs) to guarantee safety and global asymptotic 
stabilization \cite{romdlony2016stabilization}. However, as demonstrated in 
\cite{braun2020comment}, there cannot exist a CLBF that renders a point globally asymptotically 
stable while avoiding a bounded set. Also, in \cite{cortez2022compatibility}, the authors identify 
a set where convergence to the origin is guaranteed. However, as the region of attraction of a 
point within a continuous vector field must be diffeomorphic to the Euclidean space, the boundary 
of a bounded unsafe set cannot be fully included in the region of attraction 
\cite{poveda2021robust}.

In \cite{reis2020control}, the authors show that the standard CLF-CBF-QP-based approach can 
introduce undesired asymptotically stable equilibrium points and propose eliminating them by 
combining CBFs with radially-asymmetric rotating CLFs. Nevertheless, despite making 
the unwanted equilibrium points unstable, the authors do not provide global convergence guarantees. 
Also, this approach may lead to undesired oscillations. Later, in \cite{tan2024undesired}, 
the authors show that the unwanted interior equilibrium points can be eliminated using a nominal 
stabilizing controller but do not address the more challenging boundary equilibria.

Recently, deadlock resolution in the context of CBF-based control has been addressed via hybrid 
feedback, as continuous control approaches present a greater difficulty in handling this issue 
\cite{sanfelice2021hybrid}. In particular, hybrid CBF formulations, originally introduced for 
hybrid systems \cite{glotfelter2019hybrid}, \cite{robey2021learning}, have been proposed for 
deadlock resolution in continuous-time systems, where decisiveness is achieved by enhancing CBFs 
with logic variables. In \cite{braun2020explicit}, \cite{braun2021augmented}, and 
\cite{ballaben2024lyapunov}, the authors propose using an avoidance shell described by two 
partially overlapping domains, where decisiveness is achieved by switching between these domains. 
However, despite ensuring global asymptotic stabilization and safety, the avoidance shell may 
provide an overly conservative representation of the actual unsafe set. 

In \cite{marley2024hybrid}, the authors propose a hybrid CBF approach based on a polytopic 
avoidance domain, which may provide a less conservative description of the unsafe set. Under this 
method, the trajectory sequentially converges to the induced equilibria on each active hyperplane, 
resolving deadlocks via a switching mechanism similar to that from synergistic Lyapunov functions 
\cite{mayhew2011synergistic}, \cite{mayhew2011further}. However, the design from 
\cite{marley2024hybrid} also lacks flexibility, as deadlock resolution is confined to specific 
polytopes where all the induced equilibrium points are contained by more than one safe half-space. 
Thus, a particular polytope would have to be designed, which bounds the actual unsafe set and 
verifies this condition. Moreover, this polytopic description would only be valid for a specific 
set of desired equilibrium points since the positions of the induced equilibria depend on the 
desired equilibrium point through the CLF. 

Most recently, the work in \cite{mestres2024safe} introduces a sampling-based method that combines 
CLFs and CBFs with Rapidly-Exploring Random Trees (RRTs) to generate CLF-CBF compatible paths that 
guide the system toward the desired equilibrium point. However, while effective, this method has a 
higher computational complexity and has a probabilistic completeness nature. In contrast, our goal 
is to pursue a more elementary, geometric hybrid approach, following the lines of research of the 
hybrid approaches from \cite{braun2021augmented} and \cite{marley2024hybrid}.


\subsection{Paper Overview}

Inspired by the hybrid approaches from \cite{braun2021augmented} and \cite{marley2024hybrid}, this 
paper introduces a hybrid CLF-CBF control framework with global asymptotic stabilization and safety 
guarantees, offering a more flexible and systematic design approach. The proposed solution relies 
on a polytopic avoidance domain, which can be any bounded convex polytope that encloses the unsafe 
set, and it involves solving a series of safe stabilization subproblems. Each subproblem involves 
an active safe half-space and active target point, for which we show how to build compatible 
CLF and CBF constraints, ensuring convergence to the active target point. As the system 
approaches the active target, a switching mechanism updates the active half-space and active 
target, and global asymptotic stabilization is achieved by ensuring that the switching logic 
generates a setpoint sequence that converges to the desired equilibrium point. The approach is 
presented for first-order systems and extended to higher-order systems via a joint CLF-CBF 
backstepping procedure.

The remainder of this paper is structured as follows. Section \ref{Sec:Background} provides 
essential preliminaries, illustrative examples, and the problem statement. Sections 
\ref{Sec:HybridController} and \ref{Sec:Backstepping} present the hybrid control solution, with 
simulation results in Section \ref{Sec:Results}. Finally, Section \ref{Sec:Conclusion} summarizes 
conclusions and future directions.


\subsection{Notation}

$\mathbb{N}$ is the nonnegative integers set. $\mathbb{R}$, $\mathbb{R}_{\geq 0}$, and 
$\mathbb{R}_{>0}$ are the sets of real, nonnegative, and positive numbers, respectively. 
$\mathbb{R}^n$ is the $n$-dimensional euclidean space, and $\mathbb{S}^{n-1}$ is the unit sphere in 
$\mathbb{R}^n$. $\mathbb{R}^{n\times m}$ is the set of $n\times m$ real matrices, 
$\mathbb{R}^{n\times n}_{\succ 0}$ is the set of positive-definite matrices of size $n$, and 
$\mathrm{SO}(n)$ is the special orthogonal group in $\mathbb{R}^n$. For some set 
$\mathcal{S} \subseteq \mathbb{R}^n$, $\mathrm{int}(\mathcal{S})$ and $\partial\mathcal{S}$ are the 
interior and boundary of $\mathcal{S}$, respectively. The $p$-norm of a vector 
$\mathbf{x} \in \mathbb{R}^n$ is denoted $\|\mathbf{x}\|_p$ ($\|\mathbf{x}\| = \|\mathbf{x}\|_2$), 
and for two vectors $\mathbf{x}_1 \in\mathbb{R}^{n_1}$, $\mathbf{x}_2 \in \mathbb{R}^{n_2}$, we 
often use the notation 
$(\mathbf{x}_1, \mathbf{x}_2) = \left[\mathbf{x}_1^\top\, \mathbf{x}_2^\top\right]^\top \in 
\mathbb{R}^{n_1+n_2}$. For a differentiable function $h: \mathbb{R}^n \rightarrow \mathbb{R}$ and 
a field $\mathbf{G}: \mathbb{R}^n \rightarrow \mathbb{R}^{n \times m}$, we consider the 
Lie-derivative notation 
$L_\mathbf{G}h(\mathbf{x}) = \nabla h(\mathbf{x})^\top \mathbf{G}(\mathbf{x})$, and $\mathbf{x}^+$ 
is the value of $\mathbf{x}$ after an instantaneous change. A scalar function 
$V: \mathbb{R}^n \rightarrow \mathbb{R}_{\geq 0}$ is said to be positive definite around a point 
$\bar{\mathbf{x}}$ if $V(\bar{\mathbf{x}}) = 0$ and $V(\mathbf{x}) > 0$ for all 
$\mathbf{x} \in \mathbb{R}^n\setminus\{\bar{\mathbf{x}}\}$. Moreover, $\mathbf{0}_{n\times m}$ 
is the $n\times m$ zero matrix, and $\mathbf{I}_{n}$ is the identity matrix of size $n$ (dimensions 
may be omitted when clear from context). Finally, the $\arg\max$ and $\arg\min$ operators 
return an argument that optimizes some cost function over a feasible set.


\section{Preliminaries and Problem Statement} \label{Sec:Background}

We consider nonlinear control-affine systems of the form
\begin{equation}
    \Dot{\mathbf{x}} = \mathbf{f}(\mathbf{x}) + \mathbf{G}(\mathbf{x})\mathbf{u},
    \label{Eq:ControlAffineSystem}
\end{equation}
where $\mathbf{x} \in \mathbb{R}^n$ is the system state, $\mathbf{u} \in \mathbb{R}^m$ is the 
control input, and the fields $\mathbf{f}: \mathbb{R}^n \rightarrow \mathbb{R}^n$ and 
$\mathbf{G}: \mathbb{R}^n \rightarrow \mathbb{R}^{n \times m}$ are locally Lipschitz continuous. 
Applying a locally Lipschitz continuous controller 
$\mathbf{k}: \mathbb{R}^n \rightarrow \mathbb{R}^m$ to \eqref{Eq:ControlAffineSystem} yields the 
closed-loop system
\begin{equation}
    \Dot{\mathbf{x}} = \mathbf{f}(\mathbf{x}) + \mathbf{G}(\mathbf{x})\mathbf{k}(\mathbf{x}).
    \label{Eq:ControlAffineSystemClosedLoop}
\end{equation}
As the functions $\mathbf{f}$, $\mathbf{G}$, and $\mathbf{k}$ are locally Lipschitz continuous, for 
every initial condition $\mathbf{x}_0 \in \mathbb{R}^n$, there exists a unique continuously 
differentiable solution $\bm{\varphi}: I(\mathbf{x}_0) \rightarrow \mathbb{R}^n$ satisfying 
\begin{equation}
    \begin{aligned}
        \Dot{\bm{\varphi}}(t) &= \mathbf{f}(\bm{\varphi}(t)) 
        + \mathbf{G}(\bm{\varphi}(t))\mathbf{k}(\bm{\varphi}(t)),\\
        \bm{\varphi}(0) &= \mathbf{x}_0,
    \end{aligned}
    \label{Eq:ControlAffineSystemSolution}
\end{equation}
for all $t \in I(\mathbf{x}_0)$, where $I(\mathbf{x}_0) \subseteq \mathbb{R}_{\geq 0}$ is the 
maximal interval of existence for the solution \cite{perko2013differential}. If 
$I(\mathbf{x}_0) = \mathbb{R}_{\geq 0}$, the solution is called complete. Next, we define the 
notions of asymptotic stability and forward invariance considered in this paper.

\begin{definition}[Asymptotic Stability]
    An equilibrium point $\bar{\mathbf{x}}$ of the system \eqref{Eq:ControlAffineSystemClosedLoop} 
    is asymptotically stable if it is stable and there exists a maximal set 
    $\mathcal{A} \supset \{\bar{\mathbf{x}}\}$ such that, for every initial condition 
    $\mathbf{x}_0 \in \mathcal{A}$, $\bm{\varphi}$ is complete and 
    $\lim_{t\rightarrow\infty} \|\bm{\varphi}(t) - \bar{\mathbf{x}}\| = 0$. The set 
    $\mathcal{A}$ is called the region of attraction of $\bar{\mathbf{x}}$. If 
    $\mathcal{A} = \mathbb{R}^n$, then $\bar{\mathbf{x}}$ is said to be globally asymptotically 
    stable.
\end{definition}

\begin{definition}[Forward Invariance]
    A set $\mathcal{C} \subset \mathbb{R}^n$ is said to be forward invariant for the system 
    \eqref{Eq:ControlAffineSystemClosedLoop} if, for every initial condition 
    $\mathbf{x}_0 \in \mathcal{C}$, we have $\bm{\varphi}(t) \in \mathcal{C}$ for all 
    $t \in I(\mathbf{x}_0)$.
\end{definition}


\subsection{Control Lyapunov and Control Barrier Functions}

We begin by considering the common objective of globally asymptotically stabilizing the system 
\eqref{Eq:ControlAffineSystem} to a desired equilibrium point $\bar{\mathbf{x}}$. This can be 
achieved by designing a control law that drives a proper and positive-definite function (around 
$\bar{\mathbf{x}}$) $V: \mathbb{R}^n \rightarrow \mathbb{R}_{\geq 0}$ to zero, motivating the 
concept of CLF \cite{sontag1989universal}.

\begin{definition}[CLF]
    A continuously differentiable, proper, and positive-definite function 
    $V: \mathbb{R}^n \rightarrow \mathbb{R}_{\geq 0}$ around a point $\bar{\mathbf{x}}$ is a 
    CLF for the system \eqref{Eq:ControlAffineSystem} if there exists a class-$\mathcal{K}$ 
    function $\gamma: \mathbb{R}_{\geq 0} \rightarrow \mathbb{R}_{\geq 0}$ such that, for all 
    $\mathbf{x} \in \mathbb{R}^n\setminus\{\bar{\mathbf{x}}\}$,
    \begin{equation}
        \inf_{\mathbf{u} \in \mathbb{R}^m}
        [L_\mathbf{f}V(\mathbf{x}) + L_\mathbf{G}V(\mathbf{x})\mathbf{u}] < - \gamma(V(\mathbf{x})).
        \label{Eq:DefinitionCLF}
    \end{equation}
    
    Given a CLF $V$ for \eqref{Eq:ControlAffineSystem} and a corresponding 
    class-$\mathcal{K}$ function $\gamma$, we define the pointwise set of control vectors
    \begin{equation}
        \scalebox{0.915}{$
        K_\mathrm{CLF}(\mathbf{x}) = \{\mathbf{u} \in \mathbb{R}^m: 
        L_\mathbf{f}V(\mathbf{x}) + L_\mathbf{G}V(\mathbf{x})\mathbf{u} 
        \leq - \gamma(V(\mathbf{x}))\}$}.
        \label{Eq:SetCLF}
    \end{equation}
    This yields the following main result with respect to CLFs.
\end{definition}

\begin{theorem}[Stabilizing Control \cite{sontag1989universal}]
    Let $V: \mathbb{R}^n \rightarrow \mathbb{R}_{\geq 0}$ be a continuously differentiable, proper, 
    and positive-definite function around a point $\bar{\mathbf{x}}$. If $V$ is a CLF for 
    \eqref{Eq:ControlAffineSystem}, then the set $K_\mathrm{CLF}(\mathbf{x})$ is nonempty for all 
    $\mathbf{x} \in \mathbb{R}^n$, and a locally Lipschitz controller 
    $\mathbf{k}: \mathbb{R}^n \rightarrow \mathbb{R}^m$ with 
    $\mathbf{k}(\mathbf{x}) \in K_\mathrm{CLF}(\mathbf{x})$ for all $\mathbf{x} \in \mathbb{R}^n$ 
    globally asymptotically stabilizes the system to $\bar{\mathbf{x}}$.
\end{theorem}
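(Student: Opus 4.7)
The plan is to establish the two assertions separately: first the pointwise nonemptiness of $K_\text{CLF}(\mathbf{x})$, and then the global asymptotic stability of $\bar{\mathbf{x}}$ under any locally Lipschitz selection.

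For nonemptiness I would split into two cases. For every $\mathbf{x} \in \mathbb{R}^n \setminus \{\bar{\mathbf{x}}\}$, the strict inequality in \eqref{Eq:DefinitionCLF} guarantees a witness $\mathbf{u}^\star \in \mathbb{R}^m$ with $L_{\mathbf{f}}V(\mathbf{x}) + L_{\mathbf{G}}V(\mathbf{x})\mathbf{u}^\star \leq -\gamma(V(\mathbf{x}))$, so $\mathbf{u}^\star \in K_\text{CLF}(\mathbf{x})$. At the point $\bar{\mathbf{x}}$ the CLF condition is not stated, so one must argue directly. Since $V$ is continuously differentiable and positive definite around $\bar{\mathbf{x}}$, this point is a global minimizer of $V$ and hence $\nabla V(\bar{\mathbf{x}}) = \mathbf{0}$, which makes both $L_{\mathbf{f}}V(\bar{\mathbf{x}}) = 0$ and $L_{\mathbf{G}}V(\bar{\mathbf{x}}) = \mathbf{0}$. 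Combined with $V(\bar{\mathbf{x}}) = 0$ and $\gamma(0) = 0$, this shows $K_\text{CLF}(\bar{\mathbf{x}}) = \mathbb{R}^m$.

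For the stability claim, let $\mathbf{k}$ be any locally Lipschitz selection of $K_\text{CLF}$ and let $\bm{\varphi}$ denote a closed-loop solution with initial condition $\mathbf{x}_0$. The chain rule together with the defining inequality of $K_\text{CLF}$ yields
\begin{equation*}
    \tfrac{d}{dt} V(\bm{\varphi}(t))
    = L_{\mathbf{f}}V(\bm{\varphi}(t)) + L_{\mathbf{G}}V(\bm{\varphi}(t)) \mathbf{k}(\bm{\varphi}(t))
    \leq -\gamma(V(\bm{\varphi}(t))).
\end{equation*}
Because $V$ is proper, each sublevel set $\{V \leq c\}$ is compact, and the above inequality makes it forward invariant; this confines $\bm{\varphi}$ to a compact set and hence (by standard extension arguments) forces $I(\mathbf{x}_0) = \mathbb{R}_{\geq 0}$, i.e.\ the solution is complete. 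A comparison-lemma argument applied to the scalar inequality $\dot{v} \leq -\gamma(v)$ with $v(t) = V(\bm{\varphi}(t))$ then produces a class-$\mathcal{KL}$ bound $V(\bm{\varphi}(t)) \leq \beta(V(\mathbf{x}_0), t)$, so $V(\bm{\varphi}(t)) \to 0$ as $t \to \infty$. Positive-definiteness of $V$ around $\bar{\mathbf{x}}$ together with continuity then gives $\|\bm{\varphi}(t) - \bar{\mathbf{x}}\| \to 0$, while Lyapunov stability in the $\varepsilon$–$\delta$ sense follows from monotone decrease of $V$ along trajectories and the fact that $V$ bounds the distance to $\bar{\mathbf{x}}$ from below on any compact neighborhood.

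The main obstacle I anticipate is the step that goes from the differential inequality $\dot{V} \leq -\gamma(V)$ to an explicit decay estimate, since $\gamma$ is only assumed class-$\mathcal{K}$ (not class-$\mathcal{K}_\infty$, nor bounded below by a linear function). The standard way around this is the comparison lemma, which guarantees existence of the class-$\mathcal{KL}$ bound $\beta$ regardless of the specific $\gamma$; this replaces what would otherwise require ad hoc integration. Everything else is bookkeeping: the initial nonemptiness verification at $\bar{\mathbf{x}}$ and the forward-invariance argument used to upgrade local existence to completeness of solutions.
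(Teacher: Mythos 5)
Your proof is correct. The paper states this theorem as a cited background result from Sontag's work and gives no proof of its own, so there is nothing to compare against; your argument (nonemptiness from the strict infimum inequality off $\bar{\mathbf{x}}$ and from $\nabla V(\bar{\mathbf{x}})=\mathbf{0}$ at the minimizer, then completeness via properness and the standard comparison-lemma decay estimate) is the classical one. A minor remark: the class-$\mathcal{KL}$ bound is more than you need for the paper's notion of asymptotic stability --- since $V(\bm{\varphi}(t))$ is nonincreasing and bounded below, it converges to some $c\geq 0$, and if $c>0$ then $\dot{V}\leq-\gamma(c)<0$ for all time forces $V\to-\infty$, a contradiction; this elementary argument sidesteps the local-Lipschitz technicality on $\gamma$ that you flag.
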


In addition, we consider the goal of making a given safe set forward invariant. Particularly, we 
consider a set $\mathcal{C} \subset \mathbb{R}^n$ that is the 0-superlevel set of a continuously 
differentiable function $h: \mathbb{R}^n \rightarrow\mathbb{R}$ with 
$\nabla h(\mathbf{x}) \neq\mathbf{0}$ when $h(\mathbf{x}) = 0$, yielding
\begin{equation}
    \begin{aligned}
        \mathcal{C} &= \{\mathbf{x} \in \mathbb{R}^n: h(\mathbf{x}) \geq 0\},\\
        \partial\mathcal{C} &= \{\mathbf{x} \in \mathbb{R}^n: h(\mathbf{x}) = 0\},\\
        \mathrm{int}(\mathcal{C}) &= \{\mathbf{x} \in \mathbb{R}^n: h(\mathbf{x}) > 0\}.
    \end{aligned}
    \label{Eq:SafeSet}
\end{equation}
If $h$ has the properties of a CBF, then it can be used to design safe controllers for the system 
\eqref{Eq:ControlAffineSystem}.

\begin{definition}[CBF \cite{ames2016control}]
    Let $\mathcal{C} \subset \mathbb{R}^n$ be the 0-superlevel set of a continuously differentiable 
    function $h: \mathbb{R}^n \rightarrow\mathbb{R}$ with $\nabla h(\mathbf{x}) \neq\mathbf{0}$ 
    when $h(\mathbf{x}) = 0$. The function $h$ is a (zeroing) CBF for system 
    \eqref{Eq:ControlAffineSystem} if there exists an extended class-$\mathcal{K}_\infty$ function 
    $\alpha: \mathbb{R} \rightarrow \mathbb{R}$ such that, for all $\mathbf{x} \in \mathbb{R}^n$,
    \begin{equation}
        \sup_{\mathbf{u} \in \mathbb{R}^m}
        [L_\mathbf{f}h(\mathbf{x}) + L_\mathbf{G}h(\mathbf{x})\mathbf{u}] > - \alpha(h(\mathbf{x})).
        \label{Eq:DefinitionCBF}
    \end{equation}

    Such a definition means that a CBF is allowed to decrease in the interior of the safe set but 
    not on its boundary. Similar to CLFs, given a CBF $h$ for \eqref{Eq:ControlAffineSystem} and a 
    corresponding extended class-$\mathcal{K}_\infty$ function $\alpha$, we define the pointwise 
    set of controls
    \begin{equation}
        \scalebox{0.935}{$
        K_\mathrm{CBF}(\mathbf{x}) = \{\mathbf{u} \in \mathbb{R}^m: 
        L_\mathbf{f}h(\mathbf{x}) + L_\mathbf{G}h(\mathbf{x})\mathbf{u} 
        \geq - \alpha(h(\mathbf{x}))\}$}.
        \label{Eq:SetCBF}
    \end{equation}
    This yields the following main result concerning CBFs.
\end{definition}

\begin{theorem}[Safeguarding Controller \cite{ames2016control}]
    Let $\mathcal{C} \subset \mathbb{R}^n$ be the 0-superlevel set of a continuously 
    differentiable function $h: \mathbb{R}^n \rightarrow\mathbb{R}$ with 
    $\nabla h(\mathbf{x}) \neq\mathbf{0}$ when $h(\mathbf{x}) = 0$. If the function $h$ is a CBF 
    for system \eqref{Eq:ControlAffineSystem}, then the set $K_\mathrm{CBF}(\mathbf{x})$ is 
    nonempty for all $\mathbf{x} \in \mathbb{R}^n$, and any locally Lipschitz continuous controller 
    $\mathbf{k}: \mathbb{R}^n \rightarrow \mathbb{R}^m$ with 
    $\mathbf{k}(\mathbf{x}) \in K_\mathrm{CBF}(\mathbf{x})$ for all $\mathbf{x} \in \mathbb{R}^n$ 
    renders $\mathcal{C}$ forward invariant for the resulting closed-loop system. Also, the set
    $\mathcal{C}$ becomes asymptotically stable in $\mathbb{R}^n$.
\end{theorem}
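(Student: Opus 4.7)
The plan is to verify the three claims sequentially, in each case exploiting the structure of the inequality \eqref{Eq:DefinitionCBF} and a comparison argument against a scalar ODE. For the nonemptiness of $K_\text{CBF}(\mathbf{x})$, I would split on whether $L_{\mathbf{G}}h(\mathbf{x})$ vanishes. If $L_{\mathbf{G}}h(\mathbf{x}) = \mathbf{0}$, then the supremum in \eqref{Eq:DefinitionCBF} reduces to $L_{\mathbf{f}}h(\mathbf{x})$, so the strict CBF inequality yields $L_{\mathbf{f}}h(\mathbf{x}) > -\alpha(h(\mathbf{x}))$ and every $\mathbf{u} \in \mathbb{R}^m$ lies in $K_\text{CBF}(\mathbf{x})$. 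If instead $L_{\mathbf{G}}h(\mathbf{x}) \neq \mathbf{0}$, the supremum is $+\infty$ and any sufficiently large positive multiple $\mathbf{u} = c\,(L_{\mathbf{G}}h(\mathbf{x}))^\top$ satisfies the defining inequality of $K_\text{CBF}(\mathbf{x})$.

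For forward invariance, fix $\mathbf{x}_0 \in \mathcal{C}$ and let $\bm{\varphi}$ denote the closed-loop solution. The chain rule together with the selection $\mathbf{k}(\mathbf{x}) \in K_\text{CBF}(\mathbf{x})$ yields the scalar differential inequality
\begin{equation*}
    \dot{h}(\bm{\varphi}(t)) = L_{\mathbf{f}}h(\bm{\varphi}(t)) + L_{\mathbf{G}}h(\bm{\varphi}(t))\mathbf{k}(\bm{\varphi}(t)) \geq -\alpha(h(\bm{\varphi}(t))).
\end{equation*}
Comparing with the scalar auxiliary ODE $\dot{y} = -\alpha(y)$ with $y(0) = h(\mathbf{x}_0) \geq 0$, for which $y \equiv 0$ is an equilibrium (since $\alpha(0) = 0$) and hence $y(t) \geq 0$ for all $t \geq 0$, I obtain $h(\bm{\varphi}(t)) \geq y(t) \geq 0$ on the maximal interval of existence, proving $\bm{\varphi}(t) \in \mathcal{C}$.

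For the asymptotic stability of $\mathcal{C}$, I would repeat the argument with $\mathbf{x}_0$ satisfying $h(\mathbf{x}_0) < 0$. Because $\alpha$ is extended class-$\mathcal{K}_\infty$, one has $-\alpha(y) > 0$ whenever $y < 0$, and $y = 0$ is an asymptotically stable equilibrium of $\dot{y} = -\alpha(y)$ with region of attraction all of $\mathbb{R}$; consequently $y(t)$ strictly increases to $0^-$. Comparison then gives $h(\bm{\varphi}(t)) \geq y(t) \to 0^-$, so $\bm{\varphi}(t)$ approaches $\mathcal{C}$ while remaining at bounded distance from it, which together with the forward invariance established above yields asymptotic stability of $\mathcal{C}$.

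The main obstacle is applying the comparison lemma rigorously when $\alpha$ is only assumed continuous rather than locally Lipschitz, so that uniqueness of the solutions of $\dot{y} = -\alpha(y)$ is not automatic. The standard remedy is to lower-bound $\alpha$ by a locally Lipschitz extended class-$\mathcal{K}_\infty$ function on the relevant interval and compare against that instead; strict monotonicity of $\alpha$ then transfers the required sign information to $y(t)$. A secondary technicality is completeness of $\bm{\varphi}$, which follows because the differential inequality confines $h(\bm{\varphi}(t))$ to a bounded interval on any finite horizon and the closed-loop dynamics are locally Lipschitz, ruling out finite-time escape.
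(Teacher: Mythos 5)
The paper itself does not prove this theorem --- it is imported verbatim from \cite{ames2016control} --- and your argument is essentially the standard proof from that literature: nonemptiness of $K_\text{CBF}(\mathbf{x})$ by a case split on $L_{\mathbf{G}}h(\mathbf{x})$, then forward invariance and attractivity of $\mathcal{C}$ by comparison of $\dot{h}(\bm{\varphi}(t)) \geq -\alpha(h(\bm{\varphi}(t)))$ with the scalar ODE $\dot{y} = -\alpha(y)$. Both of those parts are sound. One correction is needed in your remedy for the non-Lipschitz $\alpha$: the bound must go the other way. You need a locally Lipschitz extended class-$\mathcal{K}$ function $\tilde{\alpha}$ with $\tilde{\alpha} \geq \alpha$ (an \emph{upper} bound on the relevant interval, which always exists and still satisfies $\tilde{\alpha}(0)=0$), because only then does $\dot{h} \geq -\alpha(h) \geq -\tilde{\alpha}(h)$ hold, so that the comparison lemma applies against the well-posed ODE $\dot{y} = -\tilde{\alpha}(y)$, for which $y \equiv 0$ is a genuine (unique) equilibrium and $y(0)\geq 0$ implies $y(t)\geq 0$. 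A lower bound $\tilde{\alpha} \leq \alpha$ gives $-\tilde{\alpha}(h) \geq -\alpha(h)$ and the differential inequality does not transfer. (Alternatively, forward invariance needs no comparison lemma at all: on $\partial\mathcal{C}$ one has $\dot{h} \geq -\alpha(0) = 0$ with $\nabla h \neq \mathbf{0}$, so the locally Lipschitz closed-loop field is subtangential to $\mathcal{C}$ and Nagumo's theorem applies directly.) A second, minor point: your completeness argument is not valid as stated, since boundedness of $h(\bm{\varphi}(t))$ does not bound $\bm{\varphi}(t)$ unless $h$ is proper, which is not assumed; this is harmless here because the paper's definition of forward invariance only requires $\bm{\varphi}(t) \in \mathcal{C}$ on the maximal interval of existence.
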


\begin{remark}
    The strictness of the inequalities in \eqref{Eq:DefinitionCLF} and \eqref{Eq:DefinitionCBF} 
    enables proving that optimization-based controllers relying on CLFs and CBFs are 
    locally Lipschitz continuous \cite{morris2013sufficient}, \cite{jankovic2018robust}.
\end{remark}


\subsection{Quadratic Program Formulation}

Stabilization and safety objectives represented by CLFs and CBFs can be unified through an 
optimization-based approach based on QPs. More specifically, given a CLF $V$ and a CBF $h$ 
associated with a safe set, these objectives can be incorporated into a controller 
$\mathbf{k}: \mathbb{R}^n \rightarrow \mathbb{R}^m$ through the following QP:
\begin{equation} 
    \begin{aligned}
        (\mathbf{k}(\mathbf{x}), \cdot)
        =\,\, &\underset{(\mathbf{u}, \delta) \in \mathbb{R}^{m+1}}{\arg\min}\,\,
        \frac{1}{2}\|\mathbf{u}\|^2 + \frac{1}{2}p\delta^2\\
        \text{subject to}\,\, &L_{\mathbf{f}}V(\mathbf{x})+L_{\mathbf{G}}V(\mathbf{x})\mathbf{u} 
        \leq -\gamma(V(\mathbf{x})) + \delta,\\
        &L_{\mathbf{f}}h(\mathbf{x})+L_{\mathbf{G}}h(\mathbf{x})\mathbf{u} 
        \geq -\alpha(h(\mathbf{x})),
    \end{aligned}
    \label{Eq:CLF-CBF-QP}
\end{equation}
where $p \in \mathbb{R}_{>0}$ is a scaling parameter, $\gamma$ is a class-$\mathcal{K}$ function 
associated with the CLF, and $\alpha$ an extended class-$\mathcal{K}_\infty$ function associated 
with the CBF. The CBF constraint ensures forward invariance of the safe set, and the relaxation 
variable $\delta$ softens the stability objective to keep the feasibility of the optimization 
problem across all $\mathbf{x} \in \mathbb{R}^n$.

For compactness, we now let
$F_V(\mathbf{x}) = L_{\mathbf{f}}V(\mathbf{x}) + \gamma(V(\mathbf{x}))$, 
$F_h(\mathbf{x}) = L_{\mathbf{f}}h(\mathbf{x}) + \alpha(h(\mathbf{x}))$, 
and $L(\mathbf{x}) = L_{\mathbf{G}}V(\mathbf{x})L_{\mathbf{G}}h(\mathbf{x})^\top$. According to 
the Karush–Kuhn–Tucker (KKT) conditions, the QP controller can be expressed in closed-form as
\begin{equation}
    \mathbf{k}(\mathbf{x}) =
    \begin{cases}
        \mathbf{k}_1(\mathbf{x}), &\text{if }\mathbf{x} \in \mathcal{S}_1,\\
        \mathbf{k}_2(\mathbf{x}), &\text{if }\mathbf{x} \in \mathcal{S}_2,\\
        \mathbf{k}_3(\mathbf{x}), &\text{if }\mathbf{x} \in \mathcal{S}_3,\\
        \mathbf{0},               &\text{if }\mathbf{x} \in \mathcal{S}_4,
    \end{cases}
    \label{Eq:ControllerExpressionStart}
\end{equation}
where the expressions corresponding to each case are
\begin{align}
    \mathbf{k}_1(\mathbf{x}) &= -\left(p^{-1} + \|L_{\mathbf{G}}V(\mathbf{x})\|^2\right)^{-1}
    F_V(\mathbf{x})L_{\mathbf{G}}V(\mathbf{x})^\top,\nonumber\\
    \mathbf{k}_2(\mathbf{x}) &= -\|L_{\mathbf{G}}h(\mathbf{x})\|^{-2}
    F_h(\mathbf{x})L_{\mathbf{G}}h(\mathbf{x})^\top,\nonumber\\
    \mathbf{k}_3(\mathbf{x}) &= 
    -\lambda_1(\mathbf{x})L_{\mathbf{G}}V(\mathbf{x})^\top 
    +\lambda_2(\mathbf{x})L_{\mathbf{G}}h(\mathbf{x})^\top.
\end{align}
Also, concerning the third case, $\lambda_1(\mathbf{x})$ and $\lambda_2(\mathbf{x})$ are given by
\begin{equation}
    \scalebox{0.86}{$
    \begin{aligned}
        \lambda_1(\mathbf{x}) &= \Delta(\mathbf{x})^{-1}\left(L(\mathbf{x})F_h(\mathbf{x})
        - \|L_{\mathbf{G}}h(\mathbf{x})\|^2F_V(\mathbf{x})\right),\\
        \lambda_2(\mathbf{x}) &= \Delta(\mathbf{x})^{-1}\left(\left(
        p^{-1} + \|L_{\mathbf{G}}V(\mathbf{x})\|^2\right)F_h(\mathbf{x}) 
        - L(\mathbf{x})F_V(\mathbf{x})\right),
    \end{aligned}$}
\end{equation}
where $\Delta(\mathbf{x})$ is defined as
\begin{equation}
    \Delta(\mathbf{x}) = L(\mathbf{x})^2 
    - \left(p^{-1} + \|L_{\mathbf{G}}V(\mathbf{x})\|^2\right)\|L_{\mathbf{G}}h(\mathbf{x})\|^2.
\end{equation} 
Moreover, the subdomains defining each case are given by
\begin{equation}
    \begin{aligned}
        \mathcal{S}_1 &= \{\mathbf{x} \in \mathbb{R}^n: 
        F_V(\mathbf{x}) \geq 0,\, s_1(\mathbf{x}) > 0\},\\
        \mathcal{S}_2 &= \{\mathbf{x} \in \mathbb{R}^n: 
        F_h(\mathbf{x}) \leq 0,\, s_2(\mathbf{x}) < 0\},\\
        \mathcal{S}_3 &= \{\mathbf{x} \in \mathbb{R}^n: 
        \Delta(\mathbf{x}) \neq 0,\, 
        \lambda_1(\mathbf{x}) \geq 0,\, \lambda_2(\mathbf{x}) \geq 0\},\\
        \mathcal{S}_4 &= \{\mathbf{x} \in\mathbb{R}^n: 
        F_V(\mathbf{x}) < 0,\, F_h(\mathbf{x}) > 0\},
    \end{aligned}
\end{equation}
where $s_1(\mathbf{x})$ and $s_2(\mathbf{x})$ are defined as
\begin{equation}
    \begin{aligned}
        s_1(\mathbf{x}) &= \left(p^{-1} + \|L_{\mathbf{G}}V(\mathbf{x})\|^2\right)F_h(\mathbf{x}) 
        - L(\mathbf{x})F_V(\mathbf{x}),\\
        s_2(\mathbf{x}) &= \|L_{\mathbf{G}}h(\mathbf{x})\|^2F_V(\mathbf{x}) 
        - L(\mathbf{x}) F_h(\mathbf{x}).
    \end{aligned}
    \label{Eq:ControllerExpressionEnd}
\end{equation}

The first case corresponds to the CLF constraint being active and the CBF constraint being 
inactive. Conversely, the second scenario occurs when the CBF constraint is active, but the CLF 
constraint is inactive. The third case involves both constraints being active, and finally, the 
fourth case matches the scenario where neither constraint is active \cite{ames2016control}, 
\cite{li2023graphical}. Furthermore, the controller defined in \eqref{Eq:CLF-CBF-QP} is locally 
Lipschitz continuous if the  gradients of the CLF and CBF, along with $\gamma$ and $\alpha$, are 
locally Lipschitz continuous \cite{morris2013sufficient}, \cite{jankovic2018robust}.

\newpage

Nevertheless, this general approach has a significant drawback. Despite ensuring the forward 
invariance of the safe set as a strict requirement, relaxing the stabilization objective can 
introduce additional equilibrium points besides the minimizer of the CLF. Particularly, the 
set of equilibrium points on the safe set $\mathcal{C}$ of the closed-loop system that results 
from applying the controller \eqref{Eq:CLF-CBF-QP} into \eqref{Eq:ControlAffineSystem}, 
$\mathcal{E}_\mathcal{C}$, is determined as
\begin{equation}
    \mathcal{E}_\mathcal{C} = 
    \mathcal{E}_{\mathrm{int}(\mathcal{C})} \cup \mathcal{E}_{\partial\mathcal{C}},
\end{equation}
where $\mathcal{E}_{\mathrm{int}(\mathcal{C})}$ and $\mathcal{E}_{\partial\mathcal{C}}$ denote the 
sets of interior and boundary equilibria, respectively, which are given by
\begin{equation}
    \begin{aligned}
        \mathcal{E}_{\mathrm{int}(\mathcal{C})} &= 
        \{\mathbf{x} \in \mathcal{S}_1 \cap \mathrm{int}(\mathcal{C}): 
        \mathbf{f}(\mathbf{x}) + \mathbf{G}(\mathbf{x})\mathbf{k}_1(\mathbf{x}) = \mathbf{0}\},\\
        \mathcal{E}_{\partial\mathcal{C}} &= 
        \{\mathbf{x} \in \mathcal{S}_3 \cap \partial\mathcal{C}: 
        \mathbf{f}(\mathbf{x}) + \mathbf{G}(\mathbf{x})\mathbf{k}_3(\mathbf{x}) = \mathbf{0}\}.
    \end{aligned}
\end{equation}
Moreover, as detailed in \cite{reis2020control}, some of the induced equilibrium points can even 
be asymptotically stable, leading to deadlock situations and undermining any guarantees of task 
completion.


\subsection{Illustrative Examples}

In this subsection, we present a few examples that illustrate the application of the CLF-CBF-QP 
approach to an avoidance control problem. To provide a richer analysis and highlight the benefits 
and limitations of this method, we also include some examples obtained with an MPC approach. For 
simplicity, we consider a single-integrator system, described by
\begin{equation}
    \Dot{\mathbf{x}} = \mathbf{u},
    \label{Eq:SingleIntegrator}
\end{equation}
with $\mathbf{x}, \mathbf{u} \in \mathbb{R}^n$, and the objective is to asymptotically stabilize 
the system to a point $\bar{\mathbf{x}}$ while avoiding a bounded set 
$\mathcal{O} \subset \mathbb{R}^n$. To achieve this, we explore the strategies outlined below.

\subsubsection{CLF-CBF-QP - Ellipsoidal Fit}

This strategy represents the simplest and most straightforward approach to the problem. It consists 
of the CLF-CBF-QP formulation with the usual choice of a standard quadratic CLF $V$, so that
\begin{equation}
    V(\mathbf{x}) = \frac{1}{2}\|\mathbf{x} - \bar{\mathbf{x}}\|^2
    \label{Eq:IllustrativeExamplesCLF}
\end{equation}
for all $\mathbf{x} \in \mathbb{R}^n$, and it is based on an ellipsoidal approximation of the 
unsafe set. Accordingly, we define a safe set 
$\mathcal{C} \subseteq \mathbb{R}^n\setminus\mathcal{O}$ as the 0-superlevel set of a quadratic 
CBF $h$ given by
\begin{equation}
    h(\mathbf{x}) = \frac{1}{2}(\mathbf{x} - \mathbf{c})^\top\mathbf{A}(\mathbf{x} - \mathbf{c})
    - \frac{1}{2}r^2
\end{equation}
for all $\mathbf{x} \in \mathbb{R}^n$, where $\mathbf{c} \in \mathbb{R}^n$, 
$\mathbf{A} \in \mathbb{R}^{n\times n}_{\succ 0}$, and $r \in \mathbb{R}_{>0}$.

\subsubsection{CLF-CBF-QP - Polytopic Fit} \label{Sec:IllustrativeExamplesSmoothMax}

Alternatively, another approach that can be considered is approximating the unsafe set with a 
convex polytope, which may provide a less conservative representation of more complex unsafe 
regions. However, as the complement of a convex polytope is a union of multiple half-spaces, it 
cannot be directly defined as the 0-superlevel set of a single CBF. To overcome this, we adopt the 
technique presented in \cite{molnar2023composing} and establish a single CBF through a smooth 
approximation of the maximum function. More precisely, following the approach from 
\cite{molnar2023composing}, we define a safe set 
$\mathcal{C} \subseteq \mathbb{R}^n\setminus\mathcal{O}$ as the 0-superlevel set of a CBF $h$ given 
by
\begin{equation}
    h(\mathbf{x}) = \frac{1}{\kappa}\ln\left(\frac{1}{Q}\sum_{q=1}^{Q}\exp\left(\kappa 
    \left(\mathbf{n}_q^\top\mathbf{x}-d_q\right)\right)\right)
    \label{Eq:SmoothMax}
\end{equation}
for all $\mathbf{x} \in \mathbb{R}^n$, where $\kappa \in \mathbb{R}_{> 0}$ is a smoothing parameter 
and,

\newpage

\noindent for each $q \in \{1, \dots, Q\}$, $\mathbf{n}_q \in \mathbb{S}^{n-1}$ and 
$d_q \in \mathbb{R}$ denote, respectively, the unit outward normal and the offset associated with 
one of the facets of a convex polytope that is enclosed by the complement of $\mathcal{C}$. 
Additionally, we also consider a standard quadratic CLF $V$, defined by 
\eqref{Eq:IllustrativeExamplesCLF} for all $\mathbf{x} \in \mathbb{R}^n$.

\subsubsection{MPC - Polytopic Fit}

Finally, for comparison, we consider an MPC approach to the problem, where the unsafe set is also 
modeled as a convex polytope. Particularly, we consider a mixed-integer formulation in which, at 
each discrete-time instant, safety is enforced by requiring that at least one half-space constraint 
is satisfied (see e.g. \cite{stoican2022mixed}).

Fig. \ref{Fig:IllustrativeExamples} shows a few examples of system trajectories obtained by 
applying the previously mentioned strategies to avoid two different unsafe sets in a 
two-dimensional setting ($n = 2$). For simplicity and to facilitate the application of each method, 
we have considered polytopic unsafe sets $\mathcal{O} \subset \mathbb{R}^2$.

\newcommand{\xchaserinit}{
    \begin{tikzpicture}
        \protect\draw[black, thick] (-0.05,-0.05) -- (0.05,0.05);
        \protect\draw[black, thick] (-0.05,0.05) -- (0.05,-0.05);
    \end{tikzpicture}
}
\newcommand{\xtarget}{
    \begin{tikzpicture}
        \protect\draw[black, thick, fill] (0,0) circle (1.5pt);
    \end{tikzpicture}
}
\begin{figure}[t]
    \centering 
    \subfloat[CLF-CBF-QP - Ellipsoidal Fit]{
        \includegraphics[width=0.97\linewidth]{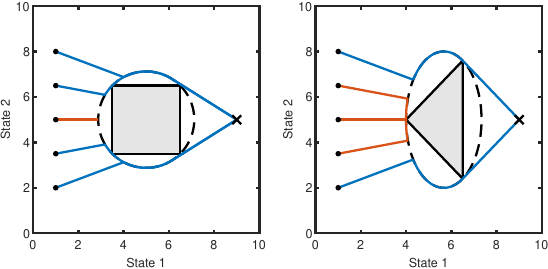}}\\
    \subfloat[CLF-CBF-QP - Polytopic Fit]{
        \includegraphics[width=0.97\linewidth]{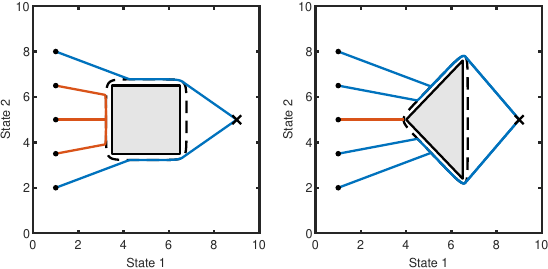}}\\
    \subfloat[MPC - Polytopic Fit]{
        \includegraphics[width=0.97\linewidth]{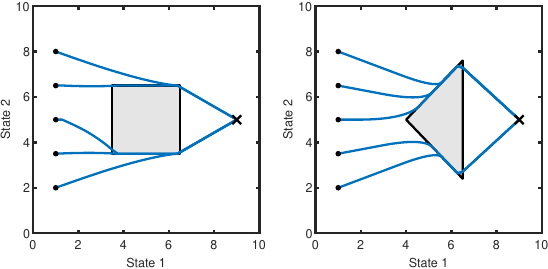}}
    \caption{System trajectories obtained using the three discussed strategies for two polytopic 
    unsafe sets. Blue trajectories indicate cases where the system successfully avoids the unsafe 
    set and reaches the desired equilibrium point. Meanwhile, orange trajectories denote cases in 
    which the system incurs in a deadlock situation. Dashed lines represent the boundary of the 
    safe set resulting from each approximation. The initial state is labeled as \xtarget and the 
    desired equilibrium point as \xchaserinit.}
    \label{Fig:IllustrativeExamples}
\end{figure}

Fig. \ref{Fig:IllustrativeExamples} (a) illustrates the CLF-CBF-QP approach, where an ellipsoid is 
used to approximate the unsafe set. In the examples shown in Fig. \ref{Fig:IllustrativeExamples}
(a), this ellipsoid corresponds to the minimum-volume ellipsoid that encloses the polytopic 
obstacle, obtained by solving a convex optimization problem \cite{cheung1993optimal}, 
\cite{van2009minimum}. However, while this approach enables a straightforward definition of a 
CBF, the ellipsoidal approximation is often too simplistic and may result in an overly 
conservative representation of certain complex unsafe regions. Additionally, as shown in Fig. 
\ref{Fig:IllustrativeExamples} (a), not all the trajectories reach the desired equilibrium point 
since an induced asymptotically stable equilibrium point appears at the boundary of the safe set, 
leading to deadlock situations.

Meanwhile, Fig. \ref{Fig:IllustrativeExamples} (b) illustrates the CLF-CBF-QP approach using a 
smooth over-approximation of the unsafe set, following the design from \eqref{Eq:SmoothMax}. As can 
be noticed, this strategy yields a closer fit to the actual unsafe set, where the conservativeness 
can be further reduced by increasing the value of $\kappa$. However, while this method has the 
potential to provide more accurate representations of more complex unsafe regions, it suffers 
from the same limitation as before: an induced asymptotically stable equilibrium point arises at 
the safe set boundary, preventing task completion for certain initial conditions.

Finally, Fig. \ref{Fig:IllustrativeExamples} (c) presents trajectories generated using the 
mixed-integer MPC approach, applied to directly account for the polytopic unsafe set. Similar to 
the strategy from Fig. \ref{Fig:IllustrativeExamples} (b), this method effectively models the 
unsafe set. However, with a sufficiently long prediction horizon, the MPC benefits from its 
predictive capability and avoids undesired equilibrium points, even in the symmetrical examples 
shown in Fig. \ref{Fig:IllustrativeExamples}, where there are two possible solutions to which the 
optimization solver can converge (either going above or below).

Nevertheless, despite its predictive advantages, the MPC approach requires solving a mixed-integer 
program at each sampling instant, adding a considerable computational demand for real-time 
applications. For instance, in the examples from Fig. \ref{Fig:IllustrativeExamples} (c), the 
Gurobi \cite{gurobi} solver was used with a sampling period of 0.1 seconds and a horizon of 20 
samples, achieving an average computation time of about 0.2 seconds. Additionally, since the MPC 
only enforces safety constraints at discrete-time instants, it struggles with navigating sharp 
corners, as can be noticed in Fig. \ref{Fig:IllustrativeExamples} (c). In contrast, the CLF-CBF-QP 
approach is highly computationally efficient and provides formal safety guarantees in continuous 
time, being well-suited for real-time, safety-critical applications. This discussion motivates the 
seek for an improved CLF-CBF-based approach that is capable of achieving decisiveness and avoid 
deadlocks.


\subsection{Problem Statement}

Motivated by the previous discussion, we now formally state the problem addressed in this paper.

\subsubsection*{Problem 1}

For $n \geq 2$, consider a first-order control-affine system defined as in 
\eqref{Eq:ControlAffineSystem}, such that $\mathbf{G}(\mathbf{x})$ has full row rank for all 
$\mathbf{x} \in \mathbb{R}^n$. Furthermore, let $\mathcal{O} \subset \mathbb{R}^n$ be a bounded 
unsafe set, and let $\bar{\mathbf{x}} \notin \mathcal{O}$ be a desired equilibrium point. Then, 
design a closed-form control strategy that renders a safe set 
$\mathcal{C} \subseteq \mathbb{R}^n\setminus\mathcal{O}$ forward invariant and $\bar{\mathbf{x}}$ an
asymptotically stable equilibrium point with region of attraction including $\mathcal{C}$. 

\subsubsection*{Problem 2} 

Extend the solution proposed for Problem 1 to higher-order control-affine systems of the 
form
\begin{equation}
    \begin{aligned}
        \Dot{\mathbf{x}} &= \mathbf{f}(\mathbf{x}) + \mathbf{G}(\mathbf{x})\mathbf{z}_1,\\
        \Dot{\mathbf{z}}_1 &= \mathbf{f}_1(\mathbf{x}, \bm{\eta}_1) 
        + \mathbf{G}_1(\mathbf{x}, \bm{\eta}_1)\mathbf{z}_2,\\
        &\,\,\,\vdots\\
        \Dot{\mathbf{z}}_r &= \mathbf{f}_r(\mathbf{x}, \bm{\eta}_r) 
        + \mathbf{G}_r(\mathbf{x}, \bm{\eta}_r)\mathbf{u},
    \end{aligned}
    \label{Eq:StrictFeedbackSystem}
\end{equation}
where, for each $i \in \{1, \dots, r\}$, the state $\mathbf{z}_i$ has dimension $p_i$, 
the auxiliary state $\bm{\eta}_i \in \mathbb{R}^{\ell_i}$ is defined as 
$\bm{\eta}_i = (\mathbf{z}_1, \dots, \mathbf{z}_i)$, the fields 
$\mathbf{f}_i: \mathbb{R}^n\times\mathbb{R}^{\ell_i} \rightarrow \mathbb{R}^{p_i}$ and 
$\mathbf{G}_i: \mathbb{R}^n\times\mathbb{R}^{\ell_i} \rightarrow \mathbb{R}^{p_i\times p_{i+1}}$ 
are locally Lipschitz continuous ($p_{r+1} = m$), and the gain matrix 
$\mathbf{G}_i(\mathbf{x}, \bm{\eta}_i)$ has full row rank for all 
$(\mathbf{x}, \bm{\eta}_i) \in \mathbb{R}^n\times\mathbb{R}^{\ell_i}$.

Note that the control-affine system defined in \eqref{Eq:StrictFeedbackSystem} extends the 
system from Problem 1 by concatenating it with additional control-affine systems in a cascaded 
manner, where only the lowest-level input can be directly controlled. Such systems are called the
strict-feedback systems. Also, we highlight that the system classes considered in Problems 1 and 2, 
in which all gain matrices have full row rank (fully actuated), are not overly conservative. In 
fact, they cover a broad range of applications, including robotic manipulators 
\cite{ferrara2000control}, ground, marine, and aerial vehicles \cite{jiangdagger1997tracking}, 
\cite{yang2004combined}, \cite{kim2004robust}, spacecraft \cite{farrell2005backstepping}, 
\cite{sun2016adaptive}, and so on.

In addition, even when the original system does not conform to these structures, it can often be 
reformulated to do so via a suitable coordinate transformation. For instance, in the case of the 
unicycle model, one can control a point slightly ahead from the original control point, which 
yields a first-order control-affine system with a full-row-rank gain matrix 
\cite{glotfelter2019hybrid}. This trick is actually very commonly used for controlling 
underactuated autonomous vehicles \cite{aguiar2007trajectory}, \cite{reis2022nonlinear}. Also, 
under certain conditions, more general systems can be transformed into a strict-feedback form with 
full-row-rank gain matrices \cite{krstivc1995nonlinear}, \cite{duan2021high}. This further 
underscores the generality of the systems under consideration.


\section{Hybrid Control Solution} \label{Sec:HybridController}

This section describes our proposed solution for first-order control-affine systems, as defined in 
Problem 1. To address this problem, we propose a hybrid control strategy that is based on a 
polytopic approximation of the unsafe set. More specifically, we consider a bounded convex polytope 
$\mathcal{P} \subset \mathbb{R}^n$ that encloses the unsafe set $\mathcal{O}$, so that 
$\mathcal{O} \subseteq \mathrm{int}(\mathcal{P})$ and 
$\bar{\mathbf{x}} \notin \mathrm{int}(\mathcal{P})$. The set $\mathcal{P}$ is defined by the 
intersection of $Q \geq n+1$ half-spaces:
\begin{equation}
    \mathcal{P} = \{\mathbf{x} \in \mathbb{R}^n: 
    h_1(\mathbf{x}) \leq 0  \wedge \dots \wedge h_Q(\mathbf{x}) \leq 0\},
    \label{Eq:PolytopeDefinition}
\end{equation}
where, for each $q \in \{1, \dots, Q\}$, $h_q: \mathbb{R}^n \rightarrow \mathbb{R}$ is defined by
\begin{equation}
    h_q(\mathbf{x}) = \mathbf{n}_q^\top\mathbf{x} - d_q.
    \label{Eq:CBFHalfspace}
\end{equation}
Here, the vector $\mathbf{n}_q \in \mathbb{S}^{n-1}$ denotes the unit normal associated with the 
facet $q$, pointing outward from $\mathcal{P}$, and $d_q \in \mathbb{R}$ is the respective offset. 
Additionally, we consider that the half-space constraints defining the polytope are nonredundant, 
meaning that removing any of the constraints would result in a distinct set. This implies that 
$\mathbf{n}_{q_1} \neq \mathbf{n}_{q_2}$ for every $q_1, q_2 \in \{1, \dots, Q\}$ such that 
$q_1 \neq q_2$. Based on this formulation, we define a safe set $\mathcal{C}$ as the closure of the 
complement of $\mathcal{P}$, i.e.,
\begin{equation}
    \mathcal{C} = \{\mathbf{x} \in \mathbb{R}^n: 
    h_1(\mathbf{x}) \geq 0 \vee \dots \vee h_Q(\mathbf{x}) \geq 0\},
    \label{Eq:HybridSafeSet}
\end{equation}
which guarantees that $\mathcal{C} \subseteq \mathbb{R}^n\setminus\mathcal{O}$ and 
$\bar{\mathbf{x}} \in \mathcal{C}$. 

\newpage

The hybrid control strategy relies on the safe set from \eqref{Eq:HybridSafeSet} and involves 
solving a series of safe stabilization subproblems. Each subproblem consists of an active safe 
half-space and an associated target point, for which we design a controller using compatible CLF 
and CBF constraints, ensuring convergence to the active setpoint. As the system approaches the 
active target, a switching mechanism updates the active safe half-space and establishes the new 
setpoint. Global asymptotic stabilization to the goal is then achieved by guaranteeing that the 
switching logic produces a setpoint sequence that converges to $\bar{\mathbf{x}}$.

The proposed control strategy can thus be described through an auxiliary hybrid system with flow 
given by
\begin{equation}
    \begin{bmatrix}
        \Dot{\mathbf{x}}\\
        \Dot{\bm{\xi}}
    \end{bmatrix} =
    \begin{bmatrix}
        \mathbf{f}(\mathbf{x}) + \mathbf{G}(\mathbf{x})\mathbf{k}_{\bm{\xi}}(\mathbf{x})\\
        \mathbf{0}
    \end{bmatrix},\quad (\mathbf{x}, \bm{\xi}) \in \mathcal{F},
    \label{Eq:HybridSystemFlow}
\end{equation}
and with jump dynamics described by
\begin{equation}
    \begin{bmatrix}
        \mathbf{x}^+\\
        \bm{\xi}^+
    \end{bmatrix} =
    \begin{bmatrix}
        \mathbf{x}\\
        \mathbf{s}(\mathbf{x}, \bm{\xi})
    \end{bmatrix},\quad (\mathbf{x}, \bm{\xi}) \in \mathcal{J}.
    \label{Eq:HybridSystemJump}
\end{equation}
Here, $\bm{\xi} = (\hat{\mathbf{x}}, q) \in \Xi \subset \mathbb{R}^n\times\{1, \dots, Q\}$ is an 
auxiliary state that includes the current target point, $\hat{\mathbf{x}}$, and the index $q$ of 
the active safe half-space, where the set $\Xi$ is defined as 
\begin{equation}
    \Xi = \{(\hat{\mathbf{x}}, q) \in \mathbb{R}^n\times\{1, \dots, Q\}: 
    h_{q}(\hat{\mathbf{x}}) \geq 0\}.
\end{equation}
Furthermore, $\mathbf{k}_{\bm{\xi}}: \mathbb{R}^n \rightarrow \mathbb{R}^m$ is a locally Lipschitz 
controller that solves the subproblem defined by the state $\bm{\xi}$. Specifically, 
$\mathbf{k}_{\bm{\xi}}$ makes the safe half-space $q$ forward invariant and the target 
$\hat{\mathbf{x}}$ asymptotically stable with region of attraction including the safe half-space 
$q$. As detailed later, the design of $\mathbf{k}_{\bm{\xi}}$ is based on compatible CLF and CBF 
constraints, which becomes feasible due to the linear nature of a half-space. The switching logic 
is determined by the flow and jump sets, 
$\mathcal{F}, \mathcal{J} \subset \mathcal{H} = \mathbb{R}^n\times\Xi$, along with the function 
$\mathbf{s}: \mathcal{J} \rightarrow \Xi$, which updates $\bm{\xi}$.

In what follows, we describe the proposed switching logic and the design of 
$\mathbf{k}_{\bm{\xi}}$. The objective is to ensure that, for every initial condition 
$\mathbf{x}_0 \in \mathcal{C}$, the hybrid control strategy produces a piecewise continuously 
differentiable solution $\bm{\varphi}: \mathbb{R}_{\geq 0} \rightarrow \mathbb{R}^n$, 
characterized by $K \in \mathbb{N}$ jumps as
\begin{equation}
    \bm{\varphi}(t) = 
    \begin{cases}
        \bm{\varphi}_0(t), &\text{if } t \in [0, t_1),\\
        \quad\vdots\\
        \bm{\varphi}_K(t), &\text{if } t \in [t_K, \infty),
    \end{cases}
\end{equation}
so that $\bm{\varphi}(t) \in \mathcal{C}$ for all $t \in \mathbb{R}_{\geq 0}$ and 
$\lim_{t\rightarrow\infty} \|\bm{\varphi}(t) - \bar{\mathbf{x}}\| = 0$.


\subsection{Switching Logic}

The proposed switching mechanism is based on a reference direction that serves to guide the 
sequence of setpoints toward the desired equilibrium point $\bar{\mathbf{x}}$. This direction is 
given by the vector $\mathbf{v} \in \mathbb{S}^{n-1}$, selected as the unit inward normal to 
a safe half-space containing $\bar{\mathbf{x}}$. Particularly, we define $\mathbf{v}$ as
\begin{equation}
    \mathbf{v} = \mathbf{n}_{\bar{q}},
    \label{Eq:ReferenceDirection}
\end{equation}
where $\bar{q}$ denotes the index of the function $h_q$ that achieves the highest value at 
$\bar{\mathbf{x}}$, meaning that\footnote{We consider that $\arg\max/\min$ returns a single 
solution. If the optimization problem has multiple solutions, $\arg\max/\min$ returns one of 
them.}
\begin{equation}
    \bar{q} = \underset{q' \in \{1, \dots, Q\}}{\arg\max} h_{q'}(\bar{\mathbf{x}}).
    \label{Eq:ReferenceHalfspace}
\end{equation}

Additionally, for a given hybrid mode defined by an active safe half-space $q$ and an active target 
point $\hat{\mathbf{x}}$, we define an auxiliary safe half-space $\hat{q}$, which forecasts the 
safe half-space that will be activated at an upcoming jump event. The auxiliary safe half-space 
$\hat{q}$ is defined through
\begin{equation}
    \hat{q} = \underset{q' \in \hat{\mathcal{Q}}_q}{\arg\max}\, h_{q'}(\hat{\mathbf{x}}),
    \label{Eq:ForecastHalfspace}
\end{equation}
where $\hat{\mathcal{Q}}_q$ is a subset of half-space indices defined as
\begin{equation}
    \hat{\mathcal{Q}}_q = \left\{q' \in \{1, \dots, Q\}: 
    \mathbf{v}^\top\mathbf{n}_{q'} > \mathbf{v}^\top\mathbf{n}_q\right\} \cup \{\bar{q}\}.
    \label{Eq:PredictionSet}
\end{equation}
This construction ensures that the safe half-space $\hat{q}$ is selected from safe half-spaces
whose normal vectors are more aligned with the reference normal $\mathbf{v} = \mathbf{n}_{\bar{q}}$
than the active one, and the union with $\{\bar{q}\}$ guarantees that the set $\hat{\mathcal{Q}}_q$ 
is nonempty when the active safe half-space is already $\bar{q}$.

Building on the previous auxiliary variables, the switching logic is based on a hysteretic 
behavior, inspired by synergistic Lyapunov functions \cite{mayhew2011synergistic}, 
\cite{mayhew2011further}. Specifically, for an active safe half-space $q$ that does not contain the 
desired equilibrium point $\bar{\mathbf{x}}$, we will enforce the active setpoint to be placed on 
the active boundary hyperplane, i.e., enforcing that
\begin{equation}
    h_q(\hat{\mathbf{x}}) = 0.
    \label{Eq:TargetPointBoundaryHyperplane}
\end{equation}
Moreover, we consider a minimum synergy gap $\mu \in \mathbb{R}_{>0}$, used as a reference for 
placing the intermediate setpoint relative to the safe half-space $\hat{q}$. Specifically, 
in addition to \eqref{Eq:TargetPointBoundaryHyperplane}, we will also enforce the intermediate 
target point to verify
\begin{equation}
    h_{\hat{q}}(\hat{\mathbf{x}}) \geq \mu.
    \label{Eq:TargetPointSynergyGap}
\end{equation}
This ensures that the intermediate setpoint lies in the interior of the safe half-space $\hat{q}$, 
with $\mu$ being a design parameter defining the minimum allowed distance to the boundary 
hyperplane $\hat{q}$.

Given this construction, and introducing a desired hysteresis width $\sigma \in (0, \mu)$, we 
define the flow and jump sets as follows:
\begin{equation}
    \begin{aligned}
        \mathcal{F} = \{(\mathbf{x}, \bm{\xi}) \in \mathcal{H}: 
        h_{\hat{q}}(\mathbf{x}) - h_q(\mathbf{x}) < \sigma \vee h_q(\mathbf{x}) < 0\},\\
        \mathcal{J} = \{(\mathbf{x}, \bm{\xi}) \in \mathcal{H}: 
        h_{\hat{q}}(\mathbf{x}) - h_q(\mathbf{x}) \geq \sigma \wedge h_q(\mathbf{x}) \geq 0\},
    \end{aligned}
    \label{Eq:JumpAndFlowSets}
\end{equation}
and the active safe half-space is updated according to
\begin{equation}
    q^+ = \hat{q},\quad\text{if } (\mathbf{x}, \bm{\xi}) \in \mathcal{J}.
    \label{Eq:UpdateActiveHalfplane}
\end{equation}
This formulation guarantees that safety is maintained when a jump occurs, as from the jump set 
definition we conclude that $h_{q^+}(\mathbf{x}) \geq h_q(\mathbf{x}) + \sigma > 0$. It also 
ensures that $\mathbf{v}^\top\mathbf{n}_{q^+}>\mathbf{v}^\top\mathbf{n}_q$, which means that the 
alignment of the active normal with the reference one increases with each jump. This latter 
condition plays a significant role, as it ensures that each half-space may only be active at most 
once, preventing the occurrence of limit cycles. It now remains to explain how the intermediate 
target points are computed to enforce the conditions in \eqref{Eq:TargetPointBoundaryHyperplane} 
and \eqref{Eq:TargetPointSynergyGap}.

To compute the intermediate target points, we assign an auxiliary tangent direction to each 
boundary hyperplane other than the reference one ($\bar{q}$). Each tangent direction is defined by 
an auxiliary vector $\mathbf{t}_q$, obtained by projecting the reference vector $\mathbf{v}$ onto 
the respective hyperplane. Specifically, for each $q \in \{1, \dots, Q\}\setminus\{\bar{q}\}$, the 
tangent vector $\mathbf{t}_q$ is given by
\begin{equation}
    \mathbf{t}_q = \left(\mathbf{I} - \mathbf{n}_q\mathbf{n}_q^\top\right)\mathbf{v} 
    + \bm{\epsilon}\chi
    \left(\left\|\left(\mathbf{I} - \mathbf{n}_q\mathbf{n}_q^\top\right)\mathbf{v}\right\|\right),
    \label{Eq:ReferenceDirectionProjection}
\end{equation}
where $\chi: \mathbb{R} \rightarrow \{0, 1\}$ is an indicator function defined as

\newpage

\begin{equation}
    \chi(s) = 
    \begin{cases}
        1, &\text{if } s = 0,\\
        0, &\text{if } s \neq 0,
    \end{cases}
    \label{Eq:DecisivenessDirection}
\end{equation}
and the vector $\bm{\epsilon} \neq \mathbf{0}$ is such that $\mathbf{v}^\top\bm{\epsilon} = 0$. The 
first term in \eqref{Eq:ReferenceDirectionProjection} is the projection of $\mathbf{v}$ onto the 
hyperplane defined by the normal vector $\mathbf{n}_q$, and the second term guarantees decisiveness 
when $\mathbf{n}_q = -\mathbf{v}$. In this paper, we do not adhere to a particular method for 
choosing $\bm{\epsilon}$, but a direct approach is to select a fixed direction arbitrarily or draw 
it from a probability distribution. Alternatively, more optimized approaches may be explored.

Consider now a hybrid mode defined by the auxiliary state $\bm{\xi} = (\hat{\mathbf{x}}, q)$, under 
the assumption that the conditions \eqref{Eq:TargetPointBoundaryHyperplane} and 
\eqref{Eq:TargetPointSynergyGap} are satisfied. At the next jump event, we assess whether the 
forecast safe half-space $\hat{q}$ (which becomes active after the jump) contains the desired 
equilibrium point $\bar{\mathbf{x}}$. If it does, the next setpoint is simply set to 
$\bar{\mathbf{x}}$, being this the simplest case. Conversely, when the safe half-space $\hat{q}$ 
does not contain $\bar{\mathbf{x}}$, we proceed as follows. First, we determine the point 
$\tilde{\mathbf{x}}$ where the boundary hyperplane $\hat{q}$ intersects the line segment between 
the current state $\mathbf{x}$ (at the time of the jump) and $\bar{\mathbf{x}}$. We then adjust 
this point by adding the tangent vector $\mathbf{t}_{\hat{q}}$ scaled by a factor 
$\tau \in \mathbb{R}_{\geq 0}$, resulting in the new target point. The scaling factor $\tau$ is 
chosen as the smallest nonnegative value that ensures the new target point also satisfies the 
synergy gap condition, and it is obtained through an auxiliary optimization problem.

More specifically, the active target point is updated as
\begin{equation}
    \hat{\mathbf{x}}^+ = 
    \begin{cases}
        \tilde{\mathbf{x}} + \mathbf{t}_{\hat{q}}\tau,
        &\text{if } h_{\hat{q}}(\bar{\mathbf{x}}) < 0 
        \text{ and } (\mathbf{x}, \bm{\xi}) \in \mathcal{J},\\
        \bar{\mathbf{x}}, &\text{if } h_{\hat{q}}(\bar{\mathbf{x}}) \geq 0
        \text{ and } (\mathbf{x}, \bm{\xi}) \in \mathcal{J},
    \end{cases}
    \label{Eq:UpdateTargetPoint}
\end{equation}
where $\tilde{\mathbf{x}}$ is the intersection point between the line segment 
$\overline{\mathbf{x}\bar{\mathbf{x}}}$ and the boundary hyperplane $\hat{q}$, which can be 
computed as
\begin{equation}
    \tilde{\mathbf{x}} = \mathbf{x} 
    + \frac{h_{\hat{q}}(\mathbf{x})}{h_{\hat{q}}(\mathbf{x}) - h_{\hat{q}}(\bar{\mathbf{x}})}
    (\bar{\mathbf{x}} - \mathbf{x}),
    \label{Eq:IntersectionPoint}
\end{equation}
and the scaling factor $\tau \in \mathbb{R}_{\geq 0}$ is determined as follows:
\begin{equation} 
    \begin{aligned}
        \tau =\,\, &\min_{(\tau', q') \in \mathbb{R}_{\geq 0}\times\hat{\mathcal{Q}}_{\hat{q}}} 
        \tau'\\
        \text{subject to}\,\, &h_{q'}(\tilde{\mathbf{x}} + \mathbf{t}_{\hat{q}}\tau') \geq \mu.
        \label{Eq:ScalingFactorOP}
    \end{aligned}
\end{equation}
The optimization problem in \eqref{Eq:ScalingFactorOP} can be compactly written as
\begin{equation} 
    \begin{aligned}
        \tau = \min_{q' \in \hat{\mathcal{Q}}_{\hat{q}}} \tau_{q'},
    \end{aligned}
    \label{Eq:ScalingFactor}
\end{equation}
where, for each $q' \in \hat{\mathcal{Q}}_{\hat{q}}$, $\tau_{q'}$ is given by
\begin{equation}
    \scalebox{0.89}{$
    \tau_{q'} = 
    \begin{cases}
        \left(\mathbf{n}_{q'}^\top\mathbf{t}_{\hat{q}}\right)^{-1}
        (\mu - h_{q'}(\tilde{\mathbf{x}})), 
        &\text{if } h_{q'}(\tilde{\mathbf{x}}) < \mu,\, 
        \mathbf{n}_{q'}^\top\mathbf{t}_{\hat{q}} > 0,\\
        0, &\text{if } h_{q'}(\tilde{\mathbf{x}}) \geq \mu,\\
        \infty\,\, (\text{infeasible}), &\text{otherwise}.
    \end{cases}$}
    \label{Eq:ScalingFactorAux}
\end{equation}

Note that when the safe half-space $\hat{q}$ does not contain $\bar{\mathbf{x}}$, the next target 
point is placed on the boundary hyperplane $\hat{q}$, and the value of $\tau$ obtained through 
\eqref{Eq:ScalingFactor}-\eqref{Eq:ScalingFactorAux} ensures that the next setpoint also satisfies 
the synergy gap condition. In particular, note that in \eqref{Eq:ScalingFactor}, we consider the 
set $\hat{\mathcal{Q}}_{\hat{q}}$, which corresponds to the safe half-spaces whose normal vectors 
are more aligned with the reference vector $\mathbf{v}$ than that of the half-space $\hat{q}$, as 
defined in \eqref{Eq:PredictionSet}. Therefore, if the current hybrid mode satisfies conditions
\eqref{Eq:TargetPointBoundaryHyperplane} and \eqref{Eq:TargetPointSynergyGap}, the update rules from
\eqref{Eq:UpdateActiveHalfplane} and \eqref{Eq:UpdateTargetPoint} ensure that the next mode also 
satisfies these conditions when $h_{\hat{q}}(\bar{\mathbf{x}}) < 0$. The switching mechanism is 
illustrated in Fig. \ref{Fig:SwitchingLogic}.

\newpage

\begin{figure}[t] 
    \centering 
    \includegraphics[width=\linewidth]{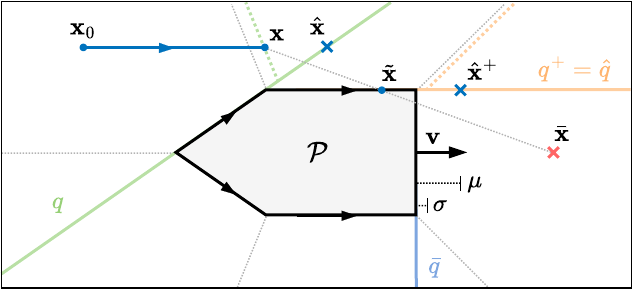} 
    \caption{This figure illustrates the switching mechanism triggered by a jump event when the 
    forecast safe half-space $\hat{q}$ does not contain the desired equilibrium point 
    $\bar{\mathbf{x}}$. In the figure, the active safe half-space $q$ is the one containing the 
    hypothetical blue trajectory and whose boundary is the diagonal green line on which the current 
    target point $\hat{\mathbf{x}}$ lies. The auxiliary safe half-space $\hat{q}$ (which will be 
    the active one after the jump) is the top horizontal one, whose boundary is the top horizontal 
    orange line (only partially represented). The current target point verifies the synergy gap 
    condition $h_{\hat{q}}(\hat{\mathbf{x}}) \geq \mu$. Also, the reference safe half-space 
    $\bar{q}$ is the one defined by the reference normal vector 
    $\mathbf{v} = \mathbf{n}_{\bar{q}}$, which contains $\bar{\mathbf{x}}$ and whose boundary is 
    the blue vertical line (only partially represented). The arrows on each facet of the polytope 
    illustrate the auxiliary tangent directions defined the vectors $\mathbf{t}_q$, obtained by 
    projecting the reference direction defined by $\mathbf{v}$ on each boundary hyperplane. 
    \textbf{Switching Logic:} The figure depicts the exact moment a jump occurs, where the current 
    state $\mathbf{x}$ reaches the boundary of the current jump region. This boundary is defined by 
    the conditions $h_{\hat{q}}(\mathbf{x}) - h_q(\mathbf{x}) = \sigma$ and 
    $h_q(\mathbf{x}) \geq 0$, and it is represented by the green dotted line on which $\mathbf{x}$ 
    lies. Since the safe half-space $\hat{q}$ does not contain $\bar{\mathbf{x}}$, the next 
    setpoint is computed according to the first case in \eqref{Eq:UpdateTargetPoint}. Specifically, 
    we locate the point $\tilde{\mathbf{x}}$ where the boundary hyperplane $\hat{q}$ intersects the 
    line segment $\overline{\mathbf{x}\bar{\mathbf{x}}}$. We then shift $\tilde{\mathbf{x}}$ in the 
    direction of the tangent vector $\mathbf{t}_{\hat{q}}$ until the synergy gap condition is 
    verified, resulting in the new setpoint $\hat{\mathbf{x}}^+$. In this case, this corresponds to 
    shifting $\tilde{\mathbf{x}}$ so that $h_{\bar{q}}(\hat{\mathbf{x}}^+) = \mu$.}
    \label{Fig:SwitchingLogic} 
\end{figure} 

Finally, it now only remains to ensure that the initial hybrid mode satisfies conditions 
\eqref{Eq:TargetPointBoundaryHyperplane} and \eqref{Eq:TargetPointSynergyGap} when the initial 
active safe half-space does not contain $\bar{\mathbf{x}}$. To address this, the initial auxiliary 
state $\bm{\xi}_0 = (\hat{\mathbf{x}}_0, q_0)$ is computed through a procedure that serves as a 
pre-initial update. Specifically, the active safe half-space is initialized through
\begin{equation}
    q_0 = \underset{q' \in \{1, \dots, Q\}}{\arg\max} h_{q'}(\mathbf{x}_0),
    \label{Eq:InitializationHalfSpace}
\end{equation}
and the initial target point is determined according to
\begin{equation}
    \hat{\mathbf{x}}_0 = 
    \begin{cases}
        \tilde{\mathbf{x}}_0 + \mathbf{t}_{q_0}\tau_0, 
        &\text{if } h_{q_0}(\bar{\mathbf{x}}) < 0,\\
        \bar{\mathbf{x}}, &\text{if } h_{q_0}(\bar{\mathbf{x}}) \geq 0,
    \end{cases}
    \label{Eq:InitializationTargetPoint}
\end{equation}
where $\tilde{\mathbf{x}}_0$ is the point of intersection between the line segment 
$\overline{\mathbf{x}_0\bar{\mathbf{x}}}$ and the boundary hyperplane $q_0$, and $\tau_0$ is 
computed using \eqref{Eq:ScalingFactor} with $\hat{q} = q_0$ and 
$\tilde{\mathbf{x}} = \tilde{\mathbf{x}}_0$. 

With the proposed switching mechanism now fully outlined, Algorithm \ref{Alg:hybrid} provides a 
concise overview of the hybrid control strategy, highlighting its implementation from a 
computational standpoint. We are now ready to present the main result of this paper, which is 
stated in Theorem \ref{Th:MyTheorem1} and is supported by the auxiliary result from Lemma 
\ref{Th:MyLemma1}.

\begin{algorithm}[t]
    \caption{Hybrid Control Algorithm}
    \label{Alg:hybrid}
    \begin{algorithmic}[1]
        \Require $\mathbf{x}_0$, $\bar{\mathbf{x}}$, $\mathbf{n}_1, \dots, \mathbf{n}_Q$, $d_1, 
        \dots, d_Q$, $\mu$, $\sigma$, $\mathbf{k}_{\bm{\xi}}$
        \State Compute $\mathbf{v}$ using 
        \eqref{Eq:ReferenceDirection}-\eqref{Eq:ReferenceHalfspace}
        \Comment{Setup}
        \State Initialize $\bm{\xi} = (\hat{\mathbf{x}}, q)$ using 
        \eqref{Eq:InitializationHalfSpace}-\eqref{Eq:InitializationTargetPoint}
        \State Initialize $\hat{q}$ using \eqref{Eq:ForecastHalfspace}-\eqref{Eq:PredictionSet}
        \While{true} \Comment{Control Loop}
            \State Get current state $\mathbf{x}$
            \If{$h_{\hat{q}}(\mathbf{x}) - h_q(\mathbf{x}) \geq \sigma$ and 
            $h_q(\mathbf{x}) \geq 0$} \Comment{Jump}
                \If{$h_{\hat{q}}(\bar{\mathbf{x}}) < 0$}
                    \State Compute $\tilde{\mathbf{x}}$, $\mathbf{t}_{\hat{q}}$, $\tau$ via 
                    \eqref{Eq:IntersectionPoint}, \eqref{Eq:ReferenceDirectionProjection}, 
                    \eqref{Eq:ScalingFactor}-\eqref{Eq:ScalingFactorAux}
                    \State $(\hat{\mathbf{x}}, q) \gets 
                    (\tilde{\mathbf{x}} + \mathbf{t}_{\hat{q}}\tau, \hat{q})$
                \Else
                    \State $(\hat{\mathbf{x}}, q) \gets (\bar{\mathbf{x}}, \hat{q})$
                \EndIf
                \State Update $\hat{q}$ using \eqref{Eq:ForecastHalfspace}-\eqref{Eq:PredictionSet}
            \EndIf
            \State Apply input $\mathbf{k}_{\bm{\xi}}(\mathbf{x})$ to the system
        \EndWhile
    \end{algorithmic}
\end{algorithm}

\begin{lemma} \label{Th:MyLemma1}
    For the nonlinear control-affine system \eqref{Eq:ControlAffineSystem}, consider the safe 
    stabilization problem associated with the auxiliary variable 
    $\bm{\xi} = (\hat{\mathbf{x}}, q) \in \Xi$, and assume there exists a locally Lipschitz 
    continuous controller 
    $\mathbf{k}_{\bm{\xi}}: \mathbb{R}^n \rightarrow \mathbb{R}^m$ that solves this problem, which 
    produces a solution $\bm{\varphi}_{\bm{\xi}}: \mathbb{R}_{\geq 0} \rightarrow \mathbb{R}^n$ 
    when applied to \eqref{Eq:ControlAffineSystem}. If the target point $\hat{\mathbf{x}}$ satisfies
    $h_{\hat{q}}(\hat{\mathbf{x}}) - h_q(\hat{\mathbf{x}}) > \sigma$, then, for every initial 
    condition $\mathbf{x}_0$ such that $h_q(\mathbf{x}_0) \geq 0$, there exists a finite time 
    instant $t_s \in \mathbb{R}_{\geq 0}$ in which 
    $(\bm{\varphi}_{\bm{\xi}}(t_s), \bm{\xi}) \in \mathcal{J}$.
\end{lemma}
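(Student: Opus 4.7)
The plan is to exploit the two properties guaranteed by the hypothesis that $\mathbf{k}_{\bm{\xi}}$ solves the safe stabilization subproblem: forward invariance of the safe half-space indexed by $q$, and asymptotic stability of $\hat{\mathbf{x}}$ with region of attraction containing that half-space. Combined with the assumed strict synergy gap at the target, $h_{\hat{q}}(\hat{\mathbf{x}}) - h_q(\hat{\mathbf{x}}) > \sigma$, a simple continuity/limit argument will place the trajectory in $\mathcal{J}$ in finite time.

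First, since $\mathbf{x}_0$ lies in the safe half-space $q$ (i.e., $h_q(\mathbf{x}_0) \geq 0$) and $\mathbf{k}_{\bm{\xi}}$ renders this half-space forward invariant, I would conclude that $h_q(\bm{\varphi}_{\bm{\xi}}(t)) \geq 0$ for all $t \in \mathbb{R}_{\geq 0}$. This immediately discharges the second condition in the definition of $\mathcal{J}$, so it only remains to verify the synergy-gap condition at some finite time.

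Second, because the controller $\mathbf{k}_{\bm{\xi}}$ makes $\hat{\mathbf{x}}$ asymptotically stable with region of attraction including the safe half-space $q$, and $\mathbf{x}_0$ lies in this region, the solution is complete and $\lim_{t\to\infty} \bm{\varphi}_{\bm{\xi}}(t) = \hat{\mathbf{x}}$. Now consider the scalar map $\psi(\mathbf{x}) = h_{\hat{q}}(\mathbf{x}) - h_q(\mathbf{x})$, which is continuous because each $h_q$ is affine (and in particular continuous). By hypothesis, $\psi(\hat{\mathbf{x}}) > \sigma$, so by continuity there exists an open neighborhood $\mathcal{U}$ of $\hat{\mathbf{x}}$ such that $\psi(\mathbf{x}) \geq \sigma$ for all $\mathbf{x} \in \mathcal{U}$.

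Third, combining convergence with the neighborhood guarantee: there exists $t_s \in \mathbb{R}_{\geq 0}$ finite such that $\bm{\varphi}_{\bm{\xi}}(t_s) \in \mathcal{U}$, whence $h_{\hat{q}}(\bm{\varphi}_{\bm{\xi}}(t_s)) - h_q(\bm{\varphi}_{\bm{\xi}}(t_s)) \geq \sigma$ while simultaneously $h_q(\bm{\varphi}_{\bm{\xi}}(t_s)) \geq 0$. Both conditions defining $\mathcal{J}$ in \eqref{Eq:JumpAndFlowSets} are therefore met at time $t_s$, yielding the claim. I do not anticipate a genuine obstacle here; the only delicate point is ensuring that the strict inequality $\psi(\hat{\mathbf{x}}) > \sigma$ (rather than equality) is what makes the continuity step go through, and that asymptotic stability (as defined in the preliminaries) yields completeness of $\bm{\varphi}_{\bm{\xi}}$ over $\mathbb{R}_{\geq 0}$ so that talking about a finite $t_s$ is meaningful.
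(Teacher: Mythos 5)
Your proposal is correct and follows essentially the same route as the paper: forward invariance discharges the condition $h_q(\bm{\varphi}_{\bm{\xi}}(t)) \geq 0$, and convergence to $\hat{\mathbf{x}}$ combined with the strict inequality $h_{\hat{q}}(\hat{\mathbf{x}}) - h_q(\hat{\mathbf{x}}) > \sigma$ places the trajectory, in finite time, in a neighborhood where the synergy-gap condition holds (the paper phrases this as $\hat{\mathbf{x}}$ lying in the interior of the set $\{\mathbf{x}: h_{\hat{q}}(\mathbf{x}) - h_q(\mathbf{x}) \geq \sigma\}$, which is your continuity argument in set form).
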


\begin{proof}
    Let $\tilde{\mathcal{J}}_{\bm{\xi}} = \{\mathbf{x} \in \mathbb{R}^n: 
    h_{\hat{q}}(\mathbf{x}) - h_q(\mathbf{x}) \geq \sigma\}$. If 
    $h_{\hat{q}}(\hat{\mathbf{x}}) - h_q(\hat{\mathbf{x}}) > \sigma$, then 
    $\hat{\mathbf{x}} \in \mathrm{int}\big(\tilde{\mathcal{J}}_{\bm{\xi}}\big) \neq \emptyset$. 
    Furthermore, for every initial condition $\mathbf{x}_0$ so that $h_q(\mathbf{x}_0) \geq 0$, 
    $\mathbf{k}_{\bm{\xi}}$ ensures that $h_q(\bm{\varphi}_{\bm{\xi}}(t)) \geq 0$ for all 
    $t \in \mathbb{R}_{\geq 0}$ and 
    $\lim_{t\rightarrow\infty} \|\bm{\varphi}_{\bm{\xi}}(t) - \hat{\mathbf{x}}\| = 0$. 
    Therefore, as $\hat{\mathbf{x}}$ belongs to the interior of $\tilde{\mathcal{J}}_{\bm{\xi}}$, 
    this result follows directly from the definition of limit of a function.
\end{proof}

\begin{theorem} \label{Th:MyTheorem1}
    Consider the control-affine system \eqref{Eq:ControlAffineSystem}. If there exists a 
    locally Lipschitz continuous controller 
    $\mathbf{k}_{\bm{\xi}}: \mathbb{R}^n \rightarrow \mathbb{R}^m$ that solves the safe 
    stabilization subproblem defined by the auxiliary state 
    $\bm{\xi} = (\hat{\mathbf{x}}, q) \in \Xi$, then the hybrid control strategy described by 
    \eqref{Eq:PolytopeDefinition}-\eqref{Eq:InitializationTargetPoint} renders the safe set 
    $\mathcal{C}$ in \eqref{Eq:HybridSafeSet} forward invariant and 
    $\bar{\mathbf{x}} \in \mathcal{C}$ an asymptotically stable equilibrium point with region of 
    attraction including $\mathcal{C}$.
\end{theorem}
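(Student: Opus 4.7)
The plan is to decompose the argument into three claims: (i) the safe set $\mathcal{C}$ is forward invariant along every hybrid solution, (ii) the sequence of jumps is finite and the active target-point sequence terminates at $\bar{\mathbf{x}}$, and (iii) once the active target equals $\bar{\mathbf{x}}$, the continuous controller $\mathbf{k}_{\bm{\xi}}$ drives the state asymptotically to $\bar{\mathbf{x}}$.

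As a preparatory step, I would establish, by induction on the jump index $k$, the invariant that every auxiliary state $\bm{\xi}_k = (\hat{\mathbf{x}}_k, q_k)$ with $\hat{\mathbf{x}}_k \neq \bar{\mathbf{x}}$ satisfies $h_{q_k}(\hat{\mathbf{x}}_k) = 0$ and $h_{\hat{q}_k}(\hat{\mathbf{x}}_k) \geq \mu$. The base case follows from the pre-initial update \eqref{Eq:InitializationHalfSpace}--\eqref{Eq:InitializationTargetPoint}. For the inductive step, the first equality follows because $\mathbf{v}_{\hat{q}}$ is orthogonal to $\mathbf{n}_{\hat{q}}$ (by construction, including the $\bm{\epsilon}$-correction when the bare projection vanishes), so $h_{\hat{q}}(\tilde{\mathbf{x}} + \tau \mathbf{v}_{\hat{q}}) = h_{\hat{q}}(\tilde{\mathbf{x}}) = 0$; the second inequality follows directly from the constraint in \eqref{Eq:ScalingFactorOP}. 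Feasibility of \eqref{Eq:ScalingFactorOP} is guaranteed because $\bar{q} \in \hat{\mathcal{Q}}_{\hat{q}}$ and $\mathbf{n}_{\bar{q}}^\top \mathbf{v}_{\hat{q}} > 0$ by \eqref{Eq:PredictionSet}, so a finite $\tau$ exists.

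For safety, on each flow interval the hypothesis on $\mathbf{k}_{\bm{\xi}_k}$ renders $\{h_{q_k} \geq 0\} \subseteq \mathcal{C}$ forward invariant, so the state never leaves $\mathcal{C}$ during flow. At each jump, the jump condition in \eqref{Eq:JumpAndFlowSets} forces $h_{\hat{q}_k}(\mathbf{x}) \geq h_{q_k}(\mathbf{x}) + \sigma \geq \sigma > 0$, so the post-jump state lies strictly inside the new half-space $q_{k+1} = \hat{q}_k$, and safety is preserved on the next flow interval. For termination of the jump sequence, the invariant yields $h_{\hat{q}_k}(\hat{\mathbf{x}}_k) - h_{q_k}(\hat{\mathbf{x}}_k) \geq \mu > \sigma$ whenever $\hat{\mathbf{x}}_k \neq \bar{\mathbf{x}}$, so Lemma \ref{Th:MyLemma1} triggers the next jump in finite time. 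By \eqref{Eq:PredictionSet}, the update $q_{k+1} = \hat{q}_k$ satisfies $\mathbf{v}^\top \mathbf{n}_{q_{k+1}} > \mathbf{v}^\top \mathbf{n}_{q_k}$, and because $\{\mathbf{v}^\top \mathbf{n}_q\}_{q=1}^{Q}$ is finite, this strict monotone advance terminates after at most $Q - 1$ jumps, at which point $\bar{q}$ is reached and the second branch of \eqref{Eq:UpdateTargetPoint} sets the active target to $\bar{\mathbf{x}}$.

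After finitely many jumps we reach $\bm{\xi}_K = (\bar{\mathbf{x}}, q_K)$ with $h_{q_K}(\bar{\mathbf{x}}) \geq 0$. If $q_K = \bar{q}$ then $\hat{\mathcal{Q}}_{q_K} = \{\bar{q}\}$ forces $\hat{q} = q_K$, so $h_{\hat{q}} - h_{q} \equiv 0 < \sigma$ and no further jumps can occur; then $\mathbf{k}_{\bm{\xi}_K}$ drives $\mathbf{x}$ asymptotically to $\bar{\mathbf{x}}$ within $\{h_{q_K} \geq 0\}$ by hypothesis. If instead $q_K \neq \bar{q}$, then $\bar{q}$ is the unique global maximiser of $h_{q'}(\bar{\mathbf{x}})$ over all $q'$, so any subsequent jump produces $q_{K+1} = \bar{q}$, reducing to the previous case. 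The main obstacle I anticipate is this final step: carefully ruling out Zeno behaviour or pathological switching after the target locks onto $\bar{\mathbf{x}}$, which requires combining the asymptotic guarantee on $\mathbf{k}_{\bm{\xi}}$ with the strict hysteresis gap $\sigma > 0$. The remaining sub-claims --- orthogonality of $\mathbf{v}_{\hat{q}}$ to $\mathbf{n}_{\hat{q}}$ (including the $\bm{\epsilon}$-branch), feasibility of \eqref{Eq:ScalingFactorOP}, and the combinatorial bound on the number of jumps --- are more routine.
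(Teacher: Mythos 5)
Your decomposition---safety on flows and at jumps, the invariant $h_{q}(\hat{\mathbf{x}}) = 0$ and $h_{\hat{q}}(\hat{\mathbf{x}}) \geq \mu$ on intermediate targets, Lemma \ref{Th:MyLemma1} to force the next jump, strict monotonicity of $\mathbf{v}^\top\mathbf{n}_{q}$ to bound the number of jumps, and the terminal case analysis once the target locks onto $\bar{\mathbf{x}}$---is essentially the paper's own proof. Your extra observations (orthogonality of $\mathbf{v}_{\hat{q}}$ to $\mathbf{n}_{\hat{q}}$ including the $\bm{\epsilon}$-branch, and $\hat{\mathcal{Q}}_{\bar{q}} = \{\bar{q}\}$) are correct and, if anything, more explicit than the paper.

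The one step that fails as stated is your feasibility argument for \eqref{Eq:ScalingFactorOP}. You claim feasibility because $\bar{q} \in \hat{\mathcal{Q}}_{\hat{q}}$ and $\mathbf{n}_{\bar{q}}^\top\mathbf{v}_{\hat{q}} > 0$. But $\mathbf{n}_{\bar{q}} = \mathbf{v}$, and from \eqref{Eq:ReferenceDirectionProjection} one gets $\mathbf{v}^\top\mathbf{v}_{\hat{q}} = 1 - (\mathbf{v}^\top\mathbf{n}_{\hat{q}})^2 + \mathbf{v}^\top\bm{\epsilon}\,\chi_0(\cdot)$, which equals zero precisely in the degenerate case $\mathbf{n}_{\hat{q}} = -\mathbf{v}$: the bare projection vanishes and $\bm{\epsilon}$ is by construction orthogonal to $\mathbf{v}$. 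So in exactly the situation the $\bm{\epsilon}$-correction exists to handle, the candidate $q' = \bar{q}$ gives $\tau_{\bar{q}} = \infty$ and your cited justification collapses. The paper splits this into two cases: when $\mathbf{n}_{\hat{q}} \neq -\mathbf{v}$ your argument works; when $\mathbf{n}_{\hat{q}} = -\mathbf{v}$ one must argue separately (using $n \geq 2$ and the geometry of a bounded polytope with $Q \geq n+1$ nonredundant facets) that some other $q' \in \hat{\mathcal{Q}}_{\hat{q}}$ satisfies $\mathbf{n}_{q'}^\top\mathbf{v}_{\hat{q}} = \mathbf{n}_{q'}^\top\bm{\epsilon} > 0$. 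You correctly track the $\bm{\epsilon}$-branch in the orthogonality claim but drop it here, where it actually matters; also note that \eqref{Eq:PredictionSet} only gives membership of $\bar{q}$ in $\hat{\mathcal{Q}}_{\hat{q}}$, not the sign of $\mathbf{n}_{\bar{q}}^\top\mathbf{v}_{\hat{q}}$. The remainder of the proposal, including the final no-further-jumps step you flag as delicate, is handled at the same level of rigor as the paper (via the hysteresis gap $\sigma$ and the fact that $h_{\hat{q}} - h_{\bar{q}} \leq 0$ once $q = \bar{q}$).
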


\begin{proof}
    To prove this result, we begin by noting that safety is maintained during flows and jumps. 
    During flows, safety is guaranteed by $\mathbf{k}_{\bm{\xi}}$, and during jumps, safety is 
    maintained since by \eqref{Eq:JumpAndFlowSets} and \eqref{Eq:UpdateActiveHalfplane} we have 
    that $h_{q^+}(\mathbf{x}) \geq h_q(\mathbf{x}) + \sigma > 0$. 

    Then, we note that when $\hat{\mathbf{x}} \neq \bar{\mathbf{x}}$, we necessarily have that 
    $h_q(\hat{\mathbf{x}}) = 0$ and $h_{\hat{q}}(\hat{\mathbf{x}}) \geq \mu$. This is ensured by 
    the initialization step \eqref{Eq:InitializationHalfSpace}-\eqref{Eq:InitializationTargetPoint} 
    and the update rule \eqref{Eq:UpdateActiveHalfplane}-\eqref{Eq:ScalingFactor}, which rely on 
    the optimization problem \eqref{Eq:ScalingFactorOP}, always solvable when 
    $h_{\hat{q}}(\bar{\mathbf{x}}) < 0$. Note that, when $\mathbf{n}_{\hat{q}} \neq -\mathbf{v}$, 
    the problem is feasible for at least $q' = \bar{q}$ since 
    $\bar{q} \in \hat{\mathcal{Q}}_{\hat{q}}$ and $\mathbf{v}^\top\mathbf{t}_{\hat{q}} > 0$. When 
    $\mathbf{n}_{\hat{q}} = -\mathbf{v}$, \eqref{Eq:ScalingFactorOP} is infeasible for 
    $q' = \bar{q}$, but since $n \geq 2$, there exists at least one 
    $q' \in \hat{\mathcal{Q}}_{\hat{q}}$ such that $\mathbf{n}_{q'}^\top\mathbf{t}_{\hat{q}} > 0$, 
    making the problem feasible. 

    By Lemma \ref{Th:MyLemma1}, we thus conclude that a jump will necessarily occur when 
    $\hat{\mathbf{x}} \neq \bar{\mathbf{x}}$ since 
    $h_{\hat{q}}(\hat{\mathbf{x}}) - h_q(\hat{\mathbf{x}}) = h_{\hat{q}}(\hat{\mathbf{x}}) 
    \geq \mu > \sigma$. Moreover, since 
    $\mathbf{v}^\top\mathbf{n}_{q^+}>\mathbf{v}^\top\mathbf{n}_q$ when a jump occurs, a safe 
    half-space will never be active more than once. Therefore, we conclude that a finite number of 
    jumps will occur until $\hat{\mathbf{x}} = \bar{\mathbf{x}}$.
    
    Once  we have $\hat{\mathbf{x}} = \bar{\mathbf{x}}$, one of two possible outcomes occurs. If 
    $h_{\bar{q}}(\bar{\mathbf{x}}) - h_q(\bar{\mathbf{x}}) \leq \sigma$, no further jumps occur and 
    the system converges to $\bar{\mathbf{x}}$. Meanwhile, if 
    $h_{\bar{q}}(\bar{\mathbf{x}}) - h_q(\bar{\mathbf{x}}) > \sigma$, one more jump occurs, after 
    which the system converges to $\bar{\mathbf{x}}$. In the second scenario, in the end we have 
    that $q = \bar{q}$, and no more jumps occur since 
    $h_{\bar{q}}(\bar{\mathbf{x}}) - h_q(\bar{\mathbf{x}}) = 0 \leq \sigma$.
\end{proof}


\subsection{Subproblem Controller Design} \label{Sec:SubproblemController}

To complete the description of the hybrid control strategy, it now remains to address the design of 
the controller $\mathbf{k}_{\bm{\xi}}$, which must solve the safe stabilization subproblem 
associated with the auxiliary state $\bm{\xi} = (\hat{\mathbf{x}}, q) \in \Xi$. Specifically, 
$\mathbf{k}_{\bm{\xi}}$ must render the safe half-space $q$ forward invariant and the target point 
$\hat{\mathbf{x}}$ asymptotically stable with its region of attraction including the active safe 
half-space $q$.

To this end, consider a CLF $V_{\hat{\mathbf{x}}}$ and a CBF $h_q$ for the system 
\eqref{Eq:ControlAffineSystem}, which define the following constraints for all 
$\mathbf{x} \in \mathbb{R}^n$:
\begin{equation}
    \begin{aligned}
        L_{\mathbf{f}}V_{\hat{\mathbf{x}}}(\mathbf{x}) 
        + L_{\mathbf{G}}V_{\hat{\mathbf{x}}}(\mathbf{x})\mathbf{u} 
        &\leq -\gamma(V_{\hat{\mathbf{x}}}(\mathbf{x})),\\
        L_{\mathbf{f}}h_q(\mathbf{x}) + L_{\mathbf{G}}h_q(\mathbf{x})\mathbf{u} 
        &\geq -\alpha(h_q(\mathbf{x})),
    \end{aligned}
\end{equation}
where $\gamma$ is a class-$\mathcal{K}$ function associated with the CLF and $\alpha$ is an 
extended class-$\mathcal{K}_\infty$ function corresponding to the CBF. To achieve both the 
stabilization and safety objectives, the CLF and CBF constraints must always be mutually 
satisfiable, that is, the intersection of the control half-spaces defined by these constraints must 
be nonempty for all $\mathbf{x} \in \mathbb{R}^n$. This intersection is always nonempty regardless 
of the values of $\gamma(V_{\hat{\mathbf{x}}}(\mathbf{x}))$ and $\alpha(h_q(\mathbf{x}))$, except 
when the vectors $L_{\mathbf{G}}V_{\hat{\mathbf{x}}}(\mathbf{x})$ and 
$L_{\mathbf{G}}h_q(\mathbf{x})$ are nonzero and aligned in the same direction. Such critical cases 
are captured by the set $\mathcal{S}_{\bm{\xi}}^\mathrm{c} \subset \mathbb{R}^n$, defined as
\begin{equation}
    \begin{aligned}
        \mathcal{S}_{\bm{\xi}}^\mathrm{c} = \{\mathbf{x} \in \mathbb{R}^n: 
        \exists \lambda \in \mathbb{R}_{> 0}:
        L_{\mathbf{G}}V_{\hat{\mathbf{x}}}(\mathbf{x}) &= \lambda L_{\mathbf{G}}h_q(\mathbf{x}),\\
        L_{\mathbf{G}}V_{\hat{\mathbf{x}}}(\mathbf{x}), L_{\mathbf{G}}h_q(\mathbf{x}) 
        &\neq \mathbf{0}\},
    \end{aligned}
    \label{Eq:CriticalSet1}
\end{equation} 
where the first condition in \eqref{Eq:CriticalSet1} can be expanded as
\begin{equation}
    (\nabla V_{\hat{\mathbf{x}}}(\mathbf{x}) 
    - \lambda \nabla h_q(\mathbf{x}))^\top\mathbf{G}(\mathbf{x}) = \mathbf{0}.
    \label{Eq:CriticalSet2}
\end{equation}
As $\mathbf{G}(\mathbf{x})$ is assumed to have full row rank, \eqref{Eq:CriticalSet2} simplifies to
\begin{equation}
    \nabla V_{\hat{\mathbf{x}}}(\mathbf{x}) = \lambda \nabla h_q(\mathbf{x}),
    \label{Eq:CriticalSet3}
\end{equation}
and thus we can also conclude that
\begin{equation}
    \lambda 
    = \frac{\|L_{\mathbf{G}}V_{\hat{\mathbf{x}}}(\mathbf{x})\|}{\|L_{\mathbf{G}}h_q(\mathbf{x})\|} 
    = \frac{\|\nabla V_{\hat{\mathbf{x}}}(\mathbf{x})\|}{\|\nabla h_q(\mathbf{x})\|}.
    \label{Eq:CriticalLambda}
\end{equation}
Now, by substituting \eqref{Eq:CriticalSet3} into the CLF constraint, we conclude that the CLF and 
CBF constraints become equivalent to
\begin{equation}
    -\alpha(h_q(\mathbf{x})) \leq 
    L_{\mathbf{f}}h_q(\mathbf{x}) + L_{\mathbf{G}}h_q(\mathbf{x})\mathbf{u} 
    \leq -\lambda^{-1}\gamma(V_{\hat{\mathbf{x}}}(\mathbf{x}))
\end{equation}
when $\mathbf{x} \in \mathcal{S}_{\bm{\xi}}^\mathrm{c}$. Hence, to ensure compatibility of the CLF 
and CBF constraints, $\gamma$ and $\alpha$ must be selected so that, for 
$\mathbf{x} \in \mathcal{S}_{\bm{\xi}}^\mathrm{c}$,
\begin{equation}
    \alpha(h_q(\mathbf{x})) \geq 
    \frac{\|\nabla h_q(\mathbf{x})\|}{\|\nabla V_{\hat{\mathbf{x}}}(\mathbf{x})\|}
    \gamma(V_{\hat{\mathbf{x}}}(\mathbf{x})).
    \label{Eq:CLF-CBF-ConditionsCritical}
\end{equation}

Consider now the typical choice of a quadratic CLF, where
\begin{equation}
    V_{\hat{\mathbf{x}}}(\mathbf{x}) = \frac{1}{2}\|\mathbf{x}-\hat{\mathbf{x}}\|^2
    \label{Eq:SubproblemCLF}
\end{equation}
for all $\mathbf{x} \in \mathbb{R}^n$, along with the CBF $h_q$ defined as in 
\eqref{Eq:CBFHalfspace}. The function $V_{\hat{\mathbf{x}}}$ defined in \eqref{Eq:SubproblemCLF} is 
a valid CLF for \eqref{Eq:ControlAffineSystem} when $\mathbf{G}(\mathbf{x})$ has full row rank, as 
for any class-$\mathcal{K}$ function $\gamma$ we have that
\begin{equation}
    \inf_{\mathbf{u} \in \mathbb{R}^m} 
    [L_\mathbf{f}V_{\hat{\mathbf{x}}}(\mathbf{x}) 
    + L_\mathbf{G}V_{\hat{\mathbf{x}}}(\mathbf{x})\mathbf{u}] = -\infty
    < - \gamma(V_{\hat{\mathbf{x}}}(\mathbf{x}))
\end{equation}
for all $\mathbf{x} \in \mathbb{R}^n\setminus\{\hat{\mathbf{x}}\}$. Similarly, the function $h_q$ 
defined as in \eqref{Eq:CBFHalfspace} is a valid CBF for \eqref{Eq:ControlAffineSystem} when 
$\mathbf{G}(\mathbf{x})$ has full row rank, as for any extended class-$\mathcal{K}_\infty$ 
function $\alpha$ it holds that
\begin{equation}
    \sup_{\mathbf{u} \in \mathbb{R}^m}
    [L_\mathbf{f}h_q(\mathbf{x}) + L_\mathbf{G}h_q(\mathbf{x})\mathbf{u}] = \infty 
    > - \alpha(h_q(\mathbf{x}))
\end{equation}
for all $\mathbf{x} \in \mathbb{R}^n$. Given these design choices, the critical set in 

\newpage

\noindent \eqref{Eq:CriticalSet1} becomes an open ray along the direction of $\mathbf{n}_q$, given 
by
\begin{equation}
    \mathcal{S}_{\bm{\xi}}^\mathrm{c} = \{\hat{\mathbf{x}} + \mathbf{n}_q \lambda: 
    \lambda \in \mathbb{R}_{> 0}\}.
\end{equation}
Additionally, if we choose $\gamma(s) = 2\bar{\gamma}s$ for all $s \in \mathbb{R}_{\geq 0}$ and 
$\alpha(s) = \bar{\alpha}s$ for all $s \in \mathbb{R}$, with $\bar{\gamma}, \bar{\alpha} \in 
\mathbb{R}_{> 0}$, \eqref{Eq:CLF-CBF-ConditionsCritical} simplifies to
\begin{equation}
    \begin{aligned}
        (\bar{\alpha} - \bar{\gamma})\lambda + \bar{\alpha}h_q(\hat{\mathbf{x}}) \geq 0
    \end{aligned}
\end{equation}
for all $\lambda \in \mathbb{R}_{> 0}$. Thus, with this direct approach, compatibility between the 
CLF and CBF constraints is achieved by selecting
\begin{equation}
    \bar{\alpha} \geq \bar{\gamma}.
\end{equation}
Nevertheless, alternative design choices may be considered, as long as the condition in 
\eqref{Eq:CLF-CBF-ConditionsCritical} holds for all 
$\mathbf{x} \in \mathcal{S}_{\bm{\xi}}^\mathrm{c}$.

Given compatible CLF and CBF constraints, we can then construct an optimization-based controller 
$\mathbf{k}_{\bm{\xi}}$ as follows:
\begin{equation} 
    \begin{aligned}
        \mathbf{k}_{\bm{\xi}}(\mathbf{x}) = 
        \,\, &\underset{\mathbf{u} \in \mathbb{R}^m}{\arg\min}\,\, 
        \frac{1}{2}\|\mathbf{u}\|^2\\
        \text{subject to}\,\, &L_{\mathbf{f}}V_{\hat{\mathbf{x}}}(\mathbf{x})
        + L_{\mathbf{G}}V_{\hat{\mathbf{x}}}(\mathbf{x})\mathbf{u} 
        \leq -\gamma(V_{\hat{\mathbf{x}}}(\mathbf{x})),\\
        &L_{\mathbf{f}}h_q(\mathbf{x}) + L_{\mathbf{G}}h_q(\mathbf{x})\mathbf{u} 
        \geq -\alpha(h_q(\mathbf{x})),
    \end{aligned}
    \label{Eq:CLF-CBF-QP-Compatible}
\end{equation}
which solves the subproblem defined by $\bm{\xi} = (\hat{\mathbf{x}}, q) \in \Xi$. If the CLF and 
CBF gradients, along with the functions $\gamma$ and $\alpha$, are locally Lipschitz continuous, 
and the inequality \eqref{Eq:CLF-CBF-ConditionsCritical} holds strictly, then the controller 
defined in \eqref{Eq:CLF-CBF-QP-Compatible} is locally Lipschitz continuous on 
$\mathbb{R}^n\setminus\{\hat{\mathbf{x}}\}$ \cite{morris2013sufficient}, \cite{jankovic2018robust}. 
Also, the controller can be expressed in closed form using 
\eqref{Eq:ControllerExpressionStart}-\eqref{Eq:ControllerExpressionEnd} with $p = \infty$.

\subsubsection*{Handling Input Bounds}

Until now, we have not considered input bounds because achieving global asymptotic stabilization 
for an arbitrary drift field $\mathbf{f}$ requires enough control authority to counteract the drift 
field. However, it is important to discuss how the proposed framework can accommodate input bounds, 
as real-world systems inherently operate under bounded inputs.

To account for input bounds within the proposed approach, we have to modify the QP formulation used 
in the subproblem controller design. For instance, consider an input bound of the form 
$\|\mathbf{u}\| \leq u_\mathrm{max}$, where $u_\mathrm{max} \in \mathbb{R}_{>0}$. To ensure that 
the CLF constraint is compatible with the input bound, we have to relax the stabilization 
objective, which can be done by introducing a relaxation function 
$\delta_{\hat{\mathbf{x}}}: \mathbb{R}^n \rightarrow \mathbb{R}_{\geq 0}$. In this setup, the 
``best effort'' we can do to achieve the stabilization goal is to relax the CLF constraint only as 
much as necessary to ensure that the respective control half-space intersects the admissible input 
set. This corresponds to defining $\delta_{\hat{\mathbf{x}}}$ for all $\mathbf{x} \in \mathbb{R}^n$ 
as
\begin{equation}
    \scalebox{0.91}{$
    \delta_{\hat{\mathbf{x}}}(\mathbf{x}) = \max\{0, L_{\mathbf{f}}V_{\hat{\mathbf{x}}}(\mathbf{x}) 
    + \gamma(V_{\hat{\mathbf{x}}}(\mathbf{x}))
    - \|L_{\mathbf{G}}V_{\hat{\mathbf{x}}}(\mathbf{x})\|u_\mathrm{max}\}.$}
\end{equation}

Then, to incorporate the safety objective, we can adopt the optimal-decay QP formulation introduced 
in \cite{zeng2021safety}, which leads to the following 
subproblem controller design:
\begin{align}
    (\mathbf{k}_{\bm{\xi}}(\mathbf{x}), \cdot)
    =\,\, &\underset{(\mathbf{u}, \omega) \in \mathbb{R}^{m+1}}{\arg\min}\,\,
    \frac{1}{2}\|\mathbf{u}\|^2 + \frac{1}{2}p(\omega - 1)^2\nonumber\\
    \text{subject to}\,\, &L_{\mathbf{f}}V_{\hat{\mathbf{x}}}(\mathbf{x})
    +L_{\mathbf{G}}V_{\hat{\mathbf{x}}}(\mathbf{x})\mathbf{u} 
    \leq -\gamma(V_{\hat{\mathbf{x}}}(\mathbf{x})) 
    + \delta_{\hat{\mathbf{x}}}(\mathbf{x}),\nonumber\\
    &L_{\mathbf{f}}h_q(\mathbf{x})+L_{\mathbf{G}}h_q(\mathbf{x})\mathbf{u} 
    \geq -\omega\alpha(h_q(\mathbf{x})),\nonumber\\
    &\|\mathbf{u}\| \leq u_\mathrm{max},
\end{align}
where $p \in \mathbb{R}_{>0}$. This QP is feasible within the interior of the active safe 
half-space, as the CLF constraint has been relaxed to respect the input bound and the term 
$\omega\alpha(h_q(\mathbf{x}))$ can take any value when $h_q(\mathbf{x}) > 0$. At the boundary of 
the active safe

\newpage

\noindent half-space, the CLF and CBF constraints remain compatible, and the CLF constraint 
still respects the input bound. However, the three constraints may not always be jointly compatible 
at the boundary because $\omega$ becomes ineffective when $h_q(\mathbf{x}) = 0$. Nevertheless, it 
should be noted that this potential infeasibility arises solely due to the input bound, and if 
$u_\mathrm{max}$ is sufficiently large, the QP remains feasible at the boundary. Moreover, as 
$u_\mathrm{max} \rightarrow \infty$, we recover the unbounded controller from 
\eqref{Eq:CLF-CBF-QP-Compatible}.


\section{Backstepping the Hybrid Control Solution} \label{Sec:Backstepping}

This section extends the proposed hybrid control strategy to higher-order strict-feedback systems, 
as defined in Problem 2. This is achieved by extending the flow and jump dynamics to account for 
the additional subsystems, and the main difference lies in the design of the subproblem controller, 
which will now be designed using a joint CLF-CBF backstepping approach.

The hybrid control strategy can now be described using an auxiliary hybrid system with the 
following flow dynamics:
\begin{equation}
    \begin{bmatrix}
        \Dot{\mathbf{x}}\\
        \Dot{\mathbf{z}}_1\\
        \vdots\\
        \Dot{\mathbf{z}}_r\\
        \Dot{\bm{\xi}}
    \end{bmatrix} =
    \begin{bmatrix}
        \mathbf{f}(\mathbf{x}) + \mathbf{G}(\mathbf{x})\mathbf{z}_1\\
        \mathbf{f}_1(\mathbf{x}, \bm{\eta}_1) + \mathbf{G}_1(\mathbf{x}, \bm{\eta}_1)\mathbf{z}_2\\
        \vdots\\
        \mathbf{f}_r(\mathbf{x}, \bm{\eta}_r) + \mathbf{G}_r(\mathbf{x}, \bm{\eta}_r)
        \mathbf{k}_{\bm{\xi}, r}(\mathbf{x}, \bm{\eta}_r)\\
        \mathbf{0}
    \end{bmatrix},\, (\mathbf{x}, \bm{\xi}) \in \mathcal{F},
    \label{Eq:HybridSystemFlow2}
\end{equation}
and the following jump dynamics:
\begin{equation}
    \begin{bmatrix}
        \mathbf{x}^+\\
        \bm{\eta}_r^+\\
        \bm{\xi}^+
    \end{bmatrix} =
    \begin{bmatrix}
        \mathbf{x}\\
        \bm{\eta}_r\\
        \mathbf{s}(\mathbf{x}, \bm{\xi})
    \end{bmatrix},\quad (\mathbf{x}, \bm{\xi}) \in \mathcal{J},
    \label{Eq:HybridSystemJump2}
\end{equation}
where the flow set $\mathcal{F}$, the jump set $\mathcal{J}$, and the update function $\mathbf{s}$ 
are defined as in Section \ref{Sec:HybridController}. Moreover, 
$\mathbf{k}_{\bm{\xi}, r}: \mathbb{R}^{n\times \ell_r} \rightarrow \mathbb{R}^m$ represents a 
locally Lipschitz continuous controller that solves the safe stabilization subproblem defined by
$\bm{\xi} = (\hat{\mathbf{x}}, q) \in \Xi$. More specifically, we now require the controller 
$\mathbf{k}_{\bm{\xi}, r}$ to render the interior of the safe half-space $q$ forward invariant and 
the target point $\hat{\mathbf{x}}$ asymptotically stable with its region of attraction including 
the interior of the safe half-space. Note that we now only require forward invariance of the 
interior of the safe half-space, as it is not possible to design a controller that ensures safety 
for every initial condition of the state $\bm{\eta}_r$ when the initial state $\mathbf{x}$ lies on 
the safe set boundary. With this formulation in place, we can readily extend the result of 
Theorem \ref{Th:MyTheorem1} to higher-order strict-feedback systems.

\begin{theorem} \label{Th:MyTheorem2}
    Consider the control-affine system \eqref{Eq:StrictFeedbackSystem}. If there exists a 
    locally Lipschitz controller 
    $\mathbf{k}_{\bm{\xi}, r}: \mathbb{R}^{n\times \ell_r} \rightarrow \mathbb{R}^m$ which, for the 
    subproblem defined by $\bm{\xi} = (\hat{\mathbf{x}}, q) \in \Xi$, renders the interior of the 
    safe half-space $q$ forward invariant and the target point $\hat{\mathbf{x}}$ asymptotically 
    stable with its region of attraction including the interior of the safe half-space $q$, then 
    the hybrid control strategy described by 
    \eqref{Eq:HybridSystemFlow2}-\eqref{Eq:HybridSystemJump2} renders the interior of the safe set 
    $\mathcal{C}$ in \eqref{Eq:HybridSafeSet} forward invariant and $\hat{\mathbf{x}}$ an 
    asymptotically stable equilibrium point with its region of attraction including 
    $\mathrm{int}(\mathcal{C})$, with respect to the top-level subsystem of 
    \eqref{Eq:StrictFeedbackSystem}.
\end{theorem}

\begin{proof}
    Similar to the proof of Theorem 3, by noting that safety is strictly preserved during jumps, 
    i.e., $h_{q^+}(\mathbf{x}) > 0$.
\end{proof}

To complete the extension of the hybrid approach to higher-order systems, it now remains to detail 
the design of the controller $\mathbf{k}_{\bm{\xi}, r}$, which is presented in the following 
subsections.


\subsection{Backstepping Overview}

This subsection provides a brief overview of CLF and CBF backstepping, laying the groundwork for 
the subproblem controller design. For notation simplicity, we consider a second-order 
strict-feedback system of the form
\begin{align}
    \Dot{\mathbf{x}} &= \mathbf{f}(\mathbf{x}) + \mathbf{G}(\mathbf{x})\mathbf{z}, 
    \label{Eq:BackstepSystem1}\\
    \Dot{\mathbf{z}} &= \mathbf{f}_1(\mathbf{x}, \mathbf{z}) 
    + \mathbf{G}_1(\mathbf{x}, \mathbf{z})\mathbf{u}, \label{Eq:BackstepSystem2}
\end{align}
which serves as a basis for extending the approach to higher-order systems by recursively applying 
the following steps.

\begin{theorem}[CLF Backstepping \cite{taylor2022safe}] \label{Th:CLFBackstepping}
    Let $V: \mathbb{R}^n \rightarrow \mathbb{R}_{\geq 0}$ be a CLF for the subsystem 
    \eqref{Eq:BackstepSystem1}, and let $\mathbf{k}: \mathbb{R}^n \rightarrow \mathbb{R}^p$ be a 
    continuously differentiable controller such that
    \begin{equation}
        L_{\mathbf{f}}V(\mathbf{x}) + L_{\mathbf{G}}V(\mathbf{x})\mathbf{k}(\mathbf{x}) 
        \leq -\gamma(V(\mathbf{x}))
        \label{Eq:CLFBackstepping}
    \end{equation}
    for all $\mathbf{x} \in \mathbb{R}^n$, where 
    $\gamma: \mathbb{R}_{\geq 0} \rightarrow \mathbb{R}_{\geq 0}$ is a class-$\mathcal{K}$ 
    function. Then, the function 
    $V_1: \mathbb{R}^n\times\mathbb{R}^p \rightarrow \mathbb{R}_{\geq 0}$, 
    defined as
    \begin{equation}
        V_1(\mathbf{x}, \mathbf{z}) = V(\mathbf{x}) + 
        \frac{1}{2\beta_v}\|\mathbf{z} - \mathbf{k}(\mathbf{x})\|^2
    \end{equation}
    for all $(\mathbf{x}, \mathbf{z}) \in \mathbb{R}^n\times\mathbb{R}^p$, with 
    $\beta_v \in \mathbb{R}_{>0}$, is a CLF for the overall system 
    \eqref{Eq:BackstepSystem1}-\eqref{Eq:BackstepSystem2} with a corresponding class-$\mathcal{K}$ 
    function $\gamma_1: \mathbb{R}_{\geq 0} \rightarrow \mathbb{R}_{\geq 0}$ such that 
    $\gamma_1(s) < \gamma(s)$ for all $s \in \mathbb{R}_{>0}$.
\end{theorem}

\begin{theorem}[CBF Backstepping \cite{taylor2022safe}] \label{Th:CBFBackstepping}
    Let $h: \mathbb{R}^n \rightarrow \mathbb{R}$ be a CBF for the subsystem 
    \eqref{Eq:BackstepSystem1}, and let $\mathbf{k}: \mathbb{R}^n \rightarrow \mathbb{R}^p$ 
    denote a continuously differentiable controller such that
    \begin{equation}
        L_{\mathbf{f}}h(\mathbf{x}) + L_{\mathbf{G}}h(\mathbf{x})\mathbf{k}(\mathbf{x})
        > -\alpha(h(\mathbf{x}))
        \label{Eq:CBFBackstepping}
    \end{equation}
    for all $\mathbf{x} \in \mathbb{R}^n$, where $\alpha: \mathbb{R} \rightarrow \mathbb{R}$ is an 
    extended class-$\mathcal{K}_\infty$ function. Then, the function 
    $h_1: \mathbb{R}^n\times\mathbb{R}^p \rightarrow \mathbb{R}$, defined as
    \begin{equation}
        h_1(\mathbf{x}, \mathbf{z}) = h(\mathbf{x})
        - \frac{1}{2\beta_h}\|\mathbf{z} - \mathbf{k}(\mathbf{x})\|^2
    \end{equation}
    for all $(\mathbf{x}, \mathbf{z}) \in \mathbb{R}^n\times\mathbb{R}^p$, with 
    $\beta_h \in \mathbb{R}_{>0}$, is a CBF for the overall system 
    \eqref{Eq:BackstepSystem1}-\eqref{Eq:BackstepSystem2} with an associated extended 
    class-$\mathcal{K}_\infty$ function $\alpha_1: \mathbb{R} \rightarrow \mathbb{R}$ such that 
    $\alpha_1(s) \geq \alpha(s)$ for all $s \in \mathbb{R}$.
\end{theorem}

\begin{remark} \label{Rm:2}
    The CBF $h_1$ can be employed to render its 0-superlevel set forward invariant for the 
    full system \eqref{Eq:BackstepSystem1}-\eqref{Eq:BackstepSystem2}. Therefore, to render the 
    0-superlevel set of $h$ forward invariant for the top-level subsystem 
    \eqref{Eq:BackstepSystem1}, the initial condition $(\mathbf{x}_0, \mathbf{z}_0)$ must satisfy 
    $h_1(\mathbf{x}_0, \mathbf{z}_0) \geq 0$. If the initial state $\mathbf{x}_0$ lies within the 
    interior of the 0-superlevel set of $h$, then this requirement can be satisfied by choosing the 
    gain $\beta_h \in \mathbb{R}_{>0}$ such that
    \begin{equation}
        \beta_h \geq \frac{1}{2h(\mathbf{x}_0)}\|\mathbf{z}_0 - \mathbf{k}(\mathbf{x}_0)\|^2.
    \end{equation}
\end{remark}


\subsection{Subproblem Controller Design}

Consider now the safe stabilization subproblem defined by the auxiliary state 
$\bm{\xi} = (\hat{\mathbf{x}}, q) \in \Xi$. As shown in Section \ref{Sec:SubproblemController}, we 
can construct compatible CLF and CBF constraints for the top-level subsystem of 
\eqref{Eq:StrictFeedbackSystem}. Building on this result, we now show that, starting with 
compatible CLF and CBF constraints for the top-level subsystem, it is possible to recursively 
design compatible CLF and CBF constraints for the full system \eqref{Eq:StrictFeedbackSystem} using 
a joint CLF-CBF backstepping approach. For simplicity of notation, we describe the design for a 
second-order system defined as in \eqref{Eq:BackstepSystem1}-\eqref{Eq:BackstepSystem2}, as the 
extension to higher-order systems follows naturally by repeating the same steps recursively.

To this end, let $V_{\hat{\mathbf{x}}}: \mathbb{R}^n \rightarrow \mathbb{R}_{\geq 0}$ and 
$h_q: \mathbb{R}^n \rightarrow \mathbb{R}$ denote a 

\newpage

\noindent CLF and CBF, respectively, for the top-level subsystem \eqref{Eq:BackstepSystem1}, where 
$V_{\hat{\mathbf{x}}}$ may be defined as in \eqref{Eq:SubproblemCLF} and $h_q$ is defined by 
\eqref{Eq:CBFHalfspace}. Moreover, let 
$\mathbf{k}_{\bm{\xi}}: \mathbb{R}^n \rightarrow \mathbb{R}^p$ be a continuously differentiable 
controller that simultaneously satisfies the conditions \eqref{Eq:CLFBackstepping} and 
\eqref{Eq:CBFBackstepping} with respect to $V_{\hat{\mathbf{x}}}$ and $h_q$. Relying on Theorems 
\ref{Th:CLFBackstepping} and \ref{Th:CBFBackstepping}, we can now construct a CLF 
$V_{\hat{\mathbf{x}}, 1}: \mathbb{R}^n\times\mathbb{R}^p \rightarrow \mathbb{R}_{\geq 0}$ and a CBF 
$h_{q, 1}: \mathbb{R}^n\times\mathbb{R}^p \rightarrow \mathbb{R}$ for the overall system 
\eqref{Eq:BackstepSystem1}-\eqref{Eq:BackstepSystem2} as
\begin{equation}
    \begin{aligned}
        V_{\hat{\mathbf{x}}, 1}(\mathbf{x}, \mathbf{z}) &= V_{\hat{\mathbf{x}}}(\mathbf{x}) 
        + \frac{1}{2\beta_v}\|\mathbf{z} - \mathbf{k}_{\bm{\xi}}(\mathbf{x})\|^2,\\
        h_{q, 1}(\mathbf{x}, \mathbf{z}) &= h_q(\mathbf{x}) 
        - \frac{1}{2\beta_h}\|\mathbf{z} - \mathbf{k}_{\bm{\xi}}(\mathbf{x})\|^2,
    \end{aligned}
    \label{Eq:JointBacksteppingCLFCBF}
\end{equation}
for all $(\mathbf{x}, \mathbf{z}) \in \mathbb{R}^n\times\mathbb{R}^p$, with 
$\beta_v, \beta_h \in \mathbb{R}_{>0}$. These definitions yield the following CLF and CBF 
constraints:
\begin{equation}
    \begin{aligned}
        L_{\bar{\mathbf{f}}_1}V_{\hat{\mathbf{x}}, 1}(\mathbf{x}, \mathbf{z}) + 
        L_{\bar{\mathbf{G}}_1}V_{\hat{\mathbf{x}}, 1}(\mathbf{x}, \mathbf{z})\mathbf{u} 
        &\leq -\gamma_1(V_{\hat{\mathbf{x}}, 1}(\mathbf{x}, \mathbf{z})),\\
        L_{\bar{\mathbf{f}}_1}h_{q, 1}(\mathbf{x}, \mathbf{z}) + 
        L_{\bar{\mathbf{G}}_1}h_{q, 1}(\mathbf{x}, \mathbf{z})\mathbf{u} 
        &\geq -\alpha_1(h_{q, 1}(\mathbf{x}, \mathbf{z})),
    \end{aligned}
    \label{Eq:JointBacksteppingConditions}
\end{equation}
where the fields $\bar{\mathbf{f}}_1$ and $\bar{\mathbf{G}}_1$ are defined as
\begin{equation}
    \begin{aligned}
        \bar{\mathbf{f}}_1(\mathbf{x}, \mathbf{z}) &=
        \begin{bmatrix}
            \mathbf{f}(\mathbf{x}) + \mathbf{G}(\mathbf{x})\mathbf{z}\\
            \mathbf{f}_1(\mathbf{x}, \mathbf{z})
        \end{bmatrix},\\
        \bar{\mathbf{G}}_1(\mathbf{x}, \mathbf{z}) &=
        \begin{bmatrix}
            \mathbf{0}\\
            \mathbf{G}_1(\mathbf{x}, \mathbf{z})
        \end{bmatrix},
    \end{aligned}
\end{equation}
for all $(\mathbf{x}, \mathbf{z}) \in \mathbb{R}^n\times\mathbb{R}^p$. Additionally,
$\gamma_1: \mathbb{R}_{\geq 0} \rightarrow \mathbb{R}_{\geq 0}$ is a class-$\mathcal{K}$ function 
associated with the CLF and $\alpha_1: \mathbb{R} \rightarrow \mathbb{R}$ is an extended 
class-$\mathcal{K}_\infty$ function corresponding to the CBF.

Now, by recognizing that the following relation holds:
\begin{equation}
    L_{\bar{\mathbf{G}}_1}h_{q, 1}(\mathbf{x}, \mathbf{z}) = 
    -\frac{\beta_v}{\beta_h}L_{\bar{\mathbf{G}}_1}V_{\hat{\mathbf{x}}, 1}(\mathbf{x}, \mathbf{z}),
    \label{Eq:JointBacksteppingConditions2}
\end{equation}
and by substituting \eqref{Eq:JointBacksteppingConditions2} into 
\eqref{Eq:JointBacksteppingConditions}, we conclude that the CLF and CBF constraints in 
\eqref{Eq:JointBacksteppingConditions} are equivalent to
\begin{equation}
    \scalebox{1}{$
    L_{\bar{\mathbf{G}}_1}V_{\hat{\mathbf{x}}, 1}(\mathbf{x}, \mathbf{z})\mathbf{u} \leq 
    \min\left\{-F_{V_{\hat{\mathbf{x}}, 1}}(\mathbf{x}, \mathbf{z}),
    \frac{\beta_h}{\beta_v}F_{h_{q, 1}}(\mathbf{x}, \mathbf{z})\right\}$}
\end{equation}
for all $(\mathbf{x}, \mathbf{z}) \in \mathbb{R}^n\times\mathbb{R}^p$, where
\begin{equation}
    \begin{aligned}
        F_{V_{\hat{\mathbf{x}}, 1}}(\mathbf{x}, \mathbf{z}) &= 
        L_{\bar{\mathbf{f}}_1}V_{\hat{\mathbf{x}}, 1}(\mathbf{x}, \mathbf{z})
        + \gamma_1(V_{\hat{\mathbf{x}}, 1}(\mathbf{x}, \mathbf{z})),\\
        F_{h_{q, 1}}(\mathbf{x}, \mathbf{z}) &= L_{\bar{\mathbf{f}}_1}
        h_{q, 1}(\mathbf{x}, \mathbf{z})
        + \alpha_1(h_{q, 1}(\mathbf{x}, \mathbf{z})).
    \end{aligned}
\end{equation}
Therefore, as the CLF and CBF constraints in \eqref{Eq:JointBacksteppingConditions} are equivalent 
to a single linear inequality, they are mutually satisfiable for all 
$(\mathbf{x}, \mathbf{z}) \in \mathbb{R}^n\times\mathbb{R}^p$. This means that if it is possible to 
design compatible CLF and CBF constraints for the top-level subsystem of 
\eqref{Eq:StrictFeedbackSystem}, we can recursively apply the previous joint CLF-CBF backstepping 
approach to establish compatible CLF and CBF constraints for the overall system 
\eqref{Eq:StrictFeedbackSystem}.

However, it is important to highlight that the intermediate controller $\mathbf{k}_{\bm{\xi}}$, 
used for constructing the CLF and CBF in \eqref{Eq:JointBacksteppingCLFCBF}, must be continuously 
differentiable. This means that we cannot design $\mathbf{k}_{\bm{\xi}}$ by means of a QP as in 
Section \ref{Sec:SubproblemController} because QPs typically yield only locally Lipschitz 
continuous controllers. 

To meet the smoothness requirement, we adopt the approach introduced in \cite{ong2019universal}, 
which constructs a smooth controller based on Gaussian-weighted centroids. To this end, suppose we 
have established compatible CLF and CBF constraints for the top-level subsystem 
\eqref{Eq:BackstepSystem1}, which define the sets of controls
\begin{equation}
    \scalebox{0.905}{$
    \begin{aligned}
        K_{V_{\hat{\mathbf{x}}}}(\mathbf{x}) &= \{\mathbf{z} \in \mathbb{R}^p:
        L_{\mathbf{f}}V_{\hat{\mathbf{x}}}(\mathbf{x}) 
        + L_{\mathbf{G}}V_{\hat{\mathbf{x}}}(\mathbf{x})\mathbf{z} 
        \leq -\gamma(V_{\hat{\mathbf{x}}}(\mathbf{x}))\},\\
        K_{h_q}(\mathbf{x}) &= \{\mathbf{z} \in \mathbb{R}^p:
        L_{\mathbf{f}}h_q(\mathbf{x}) + L_{\mathbf{G}}h_q(\mathbf{x})\mathbf{z} 
        \geq -\alpha(h_q(\mathbf{x}))\},
    \end{aligned}$}
    \label{Eq:JointBacksteppingConditions3}
\end{equation}
for all $\mathbf{x} \in \mathbb{R}^n$. Following the design approach from \cite{ong2019universal}, 
we

\newpage

\noindent can formulate a top-level controller $\mathbf{k}_{\bm{\xi}}$ as
\begin{equation}
    \begin{aligned}
        \mathbf{k}_{\bm{\xi}}(\mathbf{x}) &= \zeta(\rho(\mathbf{x}))
        \left(\bm{\mu}\big(K_{V_{\hat{\mathbf{x}}}}(\mathbf{x})\big) 
        + \bm{\mu}\left(K_{h_q}(\mathbf{x})\right)\right)\\
        &+ (1-\zeta(\rho(\mathbf{x})))
        \bm{\mu}\left(K_{V_{\hat{\mathbf{x}}}}(\mathbf{x}) \cap K_{h_q}(\mathbf{x})\right)
    \end{aligned}
    \label{Eq:GaussianController}
\end{equation}
for all $\mathbf{x} \in \mathbb{R}^n$, where $\zeta: \mathbb{R} \rightarrow [0, 1]$ is a smooth 
partition of the unit step function, defined as
\begin{equation}
    \zeta(s) = 
    \begin{cases}
        0, &\text{if } s \leq 0,\\
        \left(1+\dfrac{\exp(1/s)}{\exp(1/(s-1))}\right)^{-1}, &\text{if } 0 < s < 1,\\
        1, &\text{if } s \geq 1,
    \end{cases}
\end{equation}
and the function $\rho: \mathbb{R}^n \rightarrow [-1, 1]$, defined as
\begin{equation}
    \rho(\mathbf{x}) = 
    \frac{L_{\mathbf{G}}V_{\hat{\mathbf{x}}}(\mathbf{x})L_{\mathbf{G}}h_q(\mathbf{x})^\top}{
    \|L_{\mathbf{G}}V_{\hat{\mathbf{x}}}(\mathbf{x})\|\|L_{\mathbf{G}}h_q(\mathbf{x})\|},
\end{equation}
encodes the angle between $L_{\mathbf{G}}V_{\hat{\mathbf{x}}}(\mathbf{x})$ and 
$L_{\mathbf{G}}h_q(\mathbf{x})$. Moreover, 
$\bm{\mu}: \mathcal{P}(\mathbb{R}^n) \rightarrow \mathbb{R}^n$ represents the Gaussian-weighted 
centroid function, defined as 
\begin{equation}
    \bm{\mu}(\mathcal{S}) = \frac{\int_\mathcal{S}\mathbf{z}
    \exp(-\|\mathbf{z}\|^2/(2\varsigma))d\mathbf{z}}
    {\int_\mathcal{S}\exp(-\|\mathbf{z}\|^2/(2\varsigma))d\mathbf{z}}
    \label{Eq:GaussianCentroid}
\end{equation}
for every $\mathcal{S} \in \mathcal{P}(\mathbb{R}^n)$, with $\varsigma \in \mathbb{R}_{>0}$, 
which can be expressed in closed form when $\mathcal{S}$ is a half-space \cite{tallis1961moment}, 
\cite{tallis1965plane}. The controller defined by \eqref{Eq:GaussianController} respects both the 
CLF and CBF constraints in \eqref{Eq:JointBacksteppingConditions3}, and it is smooth over 
$\mathbb{R}^n \setminus \{\hat{\mathbf{x}}\}$, provided that the system dynamics, the gradients of 
the CLF and CBF, and the functions $\gamma$ and $\alpha$ are all smooth. Although the controller 
may be nonsmooth at $\hat{\mathbf{x}}$, this poses no issues when 
$\hat{\mathbf{x}} \neq \bar{\mathbf{x}}$ because a jump will occur before $\hat{\mathbf{x}}$ is 
reached. The only problematic scenario arises when $\hat{\mathbf{x}} = \bar{\mathbf{x}}$ and no 
more jumps will occur, as the top-level state $\mathbf{x}$ may overshoot through 
$\bar{\mathbf{x}}$. Nevertheless, this issue can be remedied by slightly relaxing the stabilization 
objective within an arbitrarily small region around $\bar{\mathbf{x}}$ \cite{taylor2022safe}.

\begin{remark}[Maintaining Safety across Jumps]
    To maintain safety when an update occurs, the overall state $(\mathbf{x}, \mathbf{z})$ at the 
    moment of the jump must satisfy $h_{q^+, 1}(\mathbf{x}, \mathbf{z}) \geq 0$. As shown in 
    Section \ref{Sec:HybridController}, the top-level state $\mathbf{x}$ will be within the 
    interior of the next active safe half-space, so that $h_{q^+}(\mathbf{x}) > 0$ at the moment of 
    the jump. Hence, as mentioned in Remark \ref{Rm:2}, we can select the gain $\beta_h$ such 
    that
    \begin{equation}
        \beta_h^+ \geq \frac{1}{2h_{q^+}(\mathbf{x})}
        \left\|\mathbf{z} - \mathbf{k}_{\bm{\xi}^+}(\mathbf{x})\right\|^2
    \end{equation}
    to preserve safety across transitions. This means that in practice, the value of $\beta_h$ may 
    have to change when a jump occurs. However, for simplicity of notation, this parameter 
    dependence was omitted in the formulation from 
    \eqref{Eq:HybridSystemFlow2}–\eqref{Eq:HybridSystemJump2}. If we have a desired nominal value 
    $\bar{\beta}_h$, we can then update $\beta_h$ as
    \begin{equation}
        \beta_h^+ = \max\left\{\bar{\beta}_h, \frac{1}{2h_{q^+}(\mathbf{x})}
        \left\|\mathbf{z} - \mathbf{k}_{\bm{\xi}^+}(\mathbf{x})\right\|^2\right\}.
    \end{equation}
    For higher-order systems defined as in \eqref{Eq:StrictFeedbackSystem}, there will be $r$ 
    parameters $\beta_h$, one for each level of the backstepping design, and the process has to be 
    applied recursively to ensure that $h_{q^+, r}(\mathbf{x}, \bm{\eta_r}) \geq 0$ when a jump 
    occurs.
\end{remark}

Since compatible CLF and CBF can be constructed for the top‐level subsystem of 
\eqref{Eq:StrictFeedbackSystem}, we can recursively apply the joint CLF-CBF backtepping approach to 
establish compatible CLF and CBF constraints for the overall system 
\eqref{Eq:StrictFeedbackSystem}. Once these

\newpage

\noindent constraints have been obtained, we can design an optimization‐based controller 
$\mathbf{k}_{\bm{\xi}, r}$ as follows:
\begin{equation} 
    \scalebox{0.915}{$
    \begin{aligned}
        &\mathbf{k}_{\bm{\xi}, r}(\mathbf{x}, \bm{\eta}_r) = 
        \,\, \underset{\mathbf{u} \in \mathbb{R}^m}{\arg\min}\,\, 
        \frac{1}{2}\|\mathbf{u}\|^2 \\
        \text{s.t.}\,\, 
        &L_{\bar{\mathbf{f}}_r}V_{\hat{\mathbf{x}}, r}(\mathbf{x}, \bm{\eta}_r) + 
        L_{\bar{\mathbf{G}}_r}V_{\hat{\mathbf{x}}, r}(\mathbf{x}, \bm{\eta}_r)\mathbf{u} 
        \leq -\gamma_r(V_{\hat{\mathbf{x}}, r}(\mathbf{x}, \bm{\eta}_r)),\\
        &L_{\bar{\mathbf{f}}_r}h_{q, r}(\mathbf{x}, \bm{\eta}_r) + 
        L_{\bar{\mathbf{G}}_r}h_{q, r}(\mathbf{x}, \bm{\eta}_r)\mathbf{u} 
        \geq -\alpha_r(h_{q, r}(\mathbf{x}, \bm{\eta}_r)),
    \end{aligned}$}
    \label{Eq:SubproblemController2}
\end{equation}
where $\gamma_r$ is a class-$\mathcal{K}$ function associated with the CLF and $\alpha_r$ is an 
extended class-$\mathcal{K}_\infty$ function associated with the CBF. This controller is guaranteed 
to be locally Lipschitz continuous over the state space, except at the zero of the CLF 
\cite{morris2013sufficient}, \cite{jankovic2018robust}. If input constraints have to considered, 
the optimal-decay QP formulation discussed in Section \ref{Sec:SubproblemController} has to be 
employed.


\section{Simulation Results} \label{Sec:Results}

This section presents simulation results illustrating the trajectories achieved with the proposed 
hybrid control strategy. We also discuss the advantages of our approach compared to the one 
presented in \cite{marley2024hybrid}, which, to the best of our knowledge, is the most similar 
alternative available in the literature.


\subsection{First-Order Dynamics}

We begin by considering a system with first-order dynamics. Particularly, for simplicity, we 
consider the integrator system defined by \eqref{Eq:SingleIntegrator}, and we apply the hybrid 
control law detailed in Section \ref{Sec:HybridController} to stabilize the system to a desired 
equilibrium point while avoiding a convex polytope. Fig. \ref{Fig:Results1} displays several 
examples of the trajectories and temporal profiles obtained for different polytopes in a 
$2$-dimensional setting ($n = 2$).

\begin{figure}[t]
    \centering 
    \subfloat[Pentagon, fixed initial state, variation of the initial active half-space]{
        \includegraphics[width=0.97\linewidth]{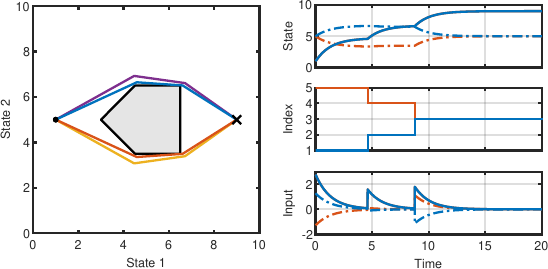}}\\
    \subfloat[Square, fixed initial state, variation of the initial reference direction]{
        \includegraphics[width=0.97\linewidth]{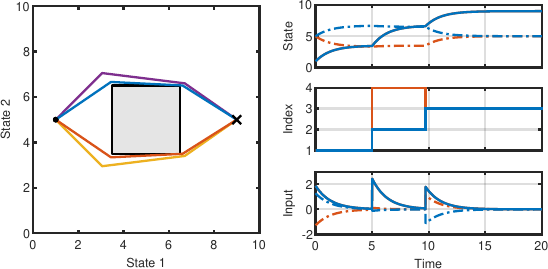}}\\
    \subfloat[Triangle, variation of the initial state]{
        \includegraphics[width=0.97\linewidth]{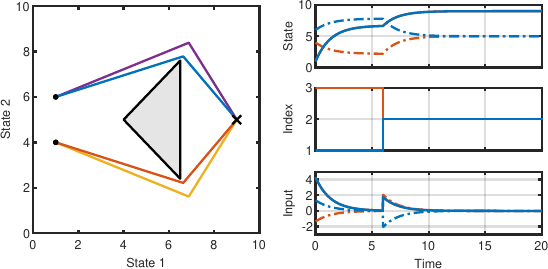}}
    \caption{Examples of system trajectories and the corresponding temporal profiles under the 
    hybrid control law from Section \ref{Sec:HybridController} for three different polytopes while 
    considering a fixed desired equilibrium point. The plots on the left display trajectories 
    obtained with $\mu = 0.2$ (blue and orange) and $\mu = 1$ (purple and yellow) for a fixed 
    $\sigma = 0.1$. The plots on the right display the respective time evolution of the state, the 
    input, and the index of the active half-space for the blue and orange trajectories. The initial 
    state is denoted as \xtarget and the desired equilibrium point as \xchaserinit.}
    \label{Fig:Results1}
\end{figure}

Fig. \ref{Fig:Results1} (a) presents an example where the initial state allows for two options for 
the initial active safe half-space. Therefore, depending on the selected initialization, two 
different trajectories can be achieved. In contrast, Fig. \ref{Fig:Results1} (b) depicts a scenario 
where the initial active safe half-space is well defined, but two possible directions can be 
associated with it. More specifically, Fig. \ref{Fig:Results1} (b) presents an example in which the 
reference direction is collinear with the normal vector of the initial active half-space, and thus, 
decisiveness is achieved as in \eqref{Eq:DecisivenessDirection} by selecting either 
$\bm{\epsilon} = (0, 1)$ or $\bm{\epsilon} = (0, -1)$. Hence, similar to Fig. 
\ref{Fig:Results1} (a), different trajectories can be produced depending on the chosen direction. 
In addition, Fig. \ref{Fig:Results1} (c) shows the results obtained for a triangular polytope 
across distinct initial states, where the initial half-space and associated direction are clearly 
defined. 

Fig. \ref{Fig:Results1} also demonstrates the impact of the synergy gap $\mu$ on the resulting 
trajectories, as all the examples are presented for two different values of $\mu$. As it can be 
noticed, as the synergy gap increases, the trajectories become more conservative since the 
intermediate target points are placed farther away from the polytope. Meanwhile, the hysteresis 
width $\sigma$ determines how deeply the system trajectories must go into the next safe half-space 
before an update (jump) occurs; however, this parameter remains constant throughout the 
simulations. 

In all the cases displayed in Fig. \ref{Fig:Results1}, the system successfully avoids the polytopic 
region and reaches the desired equilibrium point, as guaranteed by Theorem \ref{Th:MyTheorem1}. In 
particular, we highlight that, for the cases illustrated in Figs.  \ref{Fig:Results1} (a) and 
\ref{Fig:Results1} (b), a deadlock situation would occur if a continuous control approach would be 
considered, such as the one discussed in Section \ref{Sec:IllustrativeExamplesSmoothMax}.


\subsection{Comparison with Previous Work}

The most similar alternative available in the literature is the one recently proposed in
\cite{marley2024hybrid}, which also consists of a hybrid feedback approach relying on a polytopic 
avoidance domain. However, the alternative from \cite{marley2024hybrid} can be characterized as a 
hybrid CBF-only method since only the active safe half-space is updated when a jump occurs, and the 
target point remains fixed at the desired final equilibrium point. As a result, for each active 
half-space that does not contain the desired equilibrium point, there exists an induced deadlock 
point on its boundary to which the trajectory converges. Consequently, a significant limitation of 
the approach proposed in \cite{marley2024hybrid} is that deadlock resolution is only achievable for 
certain polytopes where all the induced equilibria are in positions that allow for switching the 
active safe half-space. This means that a specific polytope must be carefully designed to enclose 
the actual unsafe region while also satisfying this condition. However, such a design will also 
only be valid for a particular set of desired equilibrium points, as the positions of the induced 
equilibria depend on the desired equilibrium point through the CLF.

In contrast, our method takes advantage of the fact that, for a given safe half-space, it is 
possible to design a CLF-CBF controller based on compatible CLF and CBF constraints, which ensures 
convergence to any desired target point that belongs to the half-space. Hence, rather than 
directing trajectories toward fixed induced equilibrium points, our approach automatically assigns 
a target point to each active safe half-space in such a way that it produces a sequence of 
setpoints that converge to the desired equilibrium point. This characterizes our method as a hybrid 
CLF-CBF approach since both the active setpoint and safe half-space are updated when a jump occurs. 
Consequently, the primary advantage of our strategy is that global asymptotic stabilization and 
safety are ensured for any convex polytope. Thus, for a given unsafe set, the only remaining task 
is fitting any convex polytope to that region. Furthermore, the proposed approach offers greater 
flexibility and configurability, enabling adjustments to the conservativeness of the trajectories 
and the strategies employed to ensure decisiveness.

The previously mentioned advantages are highlighted in Fig. \ref{Fig:Results2}, which compares the 
trajectories obtained with our approach and the one from \cite{marley2024hybrid} across different 
polytopes and desired equilibrium points. In the first example, shown in Fig. \ref{Fig:Results2}
(a), the system successfully avoids the polytopic region and reaches the desired equilibrium 
point under both strategies for every initial state. This happens because, as it can be noticed, 
all the induced equilibrium points to which the trajectories under the method from 
\cite{marley2024hybrid} may converge lie within more than one safe half-space. However, as 
displayed in Fig. \ref{Fig:Results2} (b), for a different desired equilibrium point, not all the 
induced equilibria satisfy that condition, leading to instances where deadlock resolution is not 
achieved and the objective is not completed. In addition, Fig. \ref{Fig:Results2} (c) presents a 
scenario where, for a simple square, deadlock resolution is also not achieved for every initial 
state under the approach from \cite{marley2024hybrid}. In contrast, using the strategy proposed in 
this paper, the system successfully avoids the polytope and reaches the desired equilibrium point 
in all the instances shown in Fig. \ref{Fig:Results2}, as guaranteed by Theorem 
\ref{Th:MyTheorem1}. Also, as can be noticed, the trajectories produced by our approach directly 
converge to the auxiliary setpoints, rather than initially converging toward the boundary of the 
active half-space and subsequently performing an unnecessary curve.

\begin{figure}[t]
    \centering 
    \subfloat[Proposed (left) vs. \cite{marley2024hybrid} (right), pentagon, both successful]{
        \includegraphics[width=0.98\linewidth]{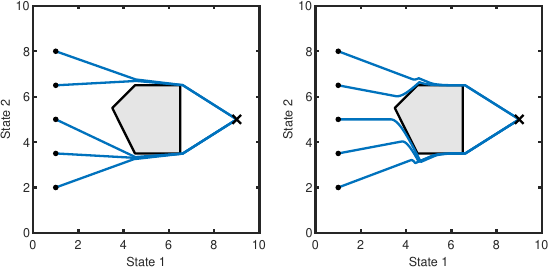}}\\
    \subfloat[Proposed (left) vs. \cite{marley2024hybrid} (right), pentagon, 
    \cite{marley2024hybrid} fails]{
        \includegraphics[width=0.98\linewidth]{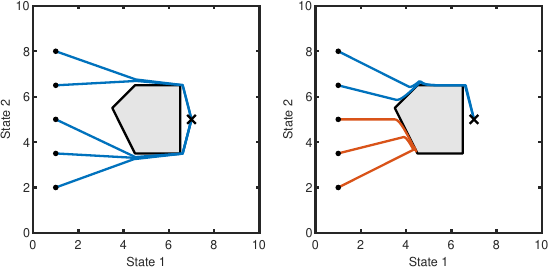}}\\
    \subfloat[Proposed (left) vs. \cite{marley2024hybrid} (right), square, 
    \cite{marley2024hybrid} fails]{
        \includegraphics[width=0.98\linewidth]{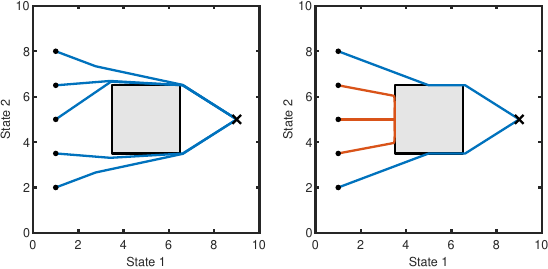}}
    \caption{Comparison between the system trajectories generated using the hybrid feedback 
    strategy detailed in Section \ref{Sec:HybridController} (left) and the ones obtained with the 
    approach proposed in \cite{marley2024hybrid} (right) across different polytopes and desired 
    equilibrium points. Blue trajectories correspond to cases where the system successfully avoids 
    the polytope and reaches the desired equilibrium point. Meanwhile, orange trajectories indicate 
    cases in which the system incurs in a deadlock situation. The initial state is denoted as 
    \xtarget and the desired equilibrium point as \xchaserinit.}
    \label{Fig:Results2}
\end{figure}


\subsection{Second-Order Dynamics}

Additionally, we now consider the double-integrator system
\begin{equation}
    \begin{aligned}
        \Dot{\mathbf{x}} &= \mathbf{z},\\
        \Dot{\mathbf{z}} &= \mathbf{u},
    \end{aligned}
\end{equation}
where $\mathbf{x}, \mathbf{z}, \mathbf{u} \in \mathbb{R}^n$, and we backstep the hybrid 
control law, as detailed in Section \ref{Sec:Backstepping}, to stabilize the top-level subsystem 
to a desired equilibrium point while avoiding a convex polytope. Fig. \ref{Fig:Results3} revisits 
the examples from Fig. \ref{Fig:Results1}, now for the double-integrator system, and displays the 
top-level system trajectories and the respective temporal profiles obtained for the different 
polytopes in a $2$-dimensional setting ($n = 2$). In all the cases displayed in Fig. 
\ref{Fig:Results3}, the top-level subsystem successfully avoids the polytopic unsafe set and 
reaches the desired equilibrium point. Moreover, as can be noticed, the trajectories presented in 
Fig. \ref{Fig:Results3} are similar to those from Fig. \ref{Fig:Results1}, however, extending to a 
system with second-order dynamics results in smoother trajectories and removes the sharp corners 
seen in Fig. \ref{Fig:Results1}.

\begin{figure}[t]
    \centering 
    \subfloat[Pentagon, fixed initial state, variation of the initial active half-space]{
        \includegraphics[width=0.97\linewidth]{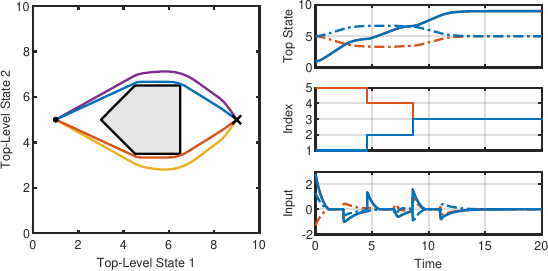}}\\
    \subfloat[Square, fixed initial state, variation of the initial reference direction]{
        \includegraphics[width=0.97\linewidth]{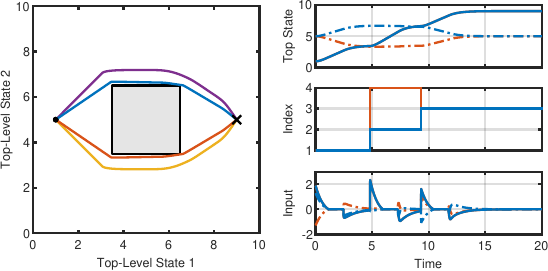}}\\
    \subfloat[Triangle, variation of the initial state]{
        \includegraphics[width=0.97\linewidth]{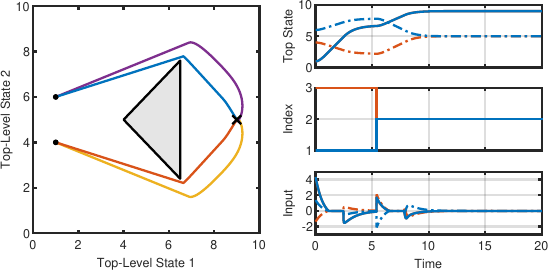}}
    \caption{Examples of top-level system trajectories and the corresponding temporal profiles for 
    the double-integrator system under the hybrid control law from Section \ref{Sec:Backstepping} 
    for three different polytopes while considering a fixed desired equilibrium point. The plots on 
    the left display trajectories obtained with $\mu = 0.2$ (blue and orange) and $\mu = 1$ 
    (purple and yellow) for a fixed $\sigma = 0.1$. The plots on the right display the respective 
    time evolution of the top-level state, the input, and the index of the active safe half-space 
    for the blue and orange trajectories. The initial top-level state is denoted as \xtarget and 
    the desired equilibrium point as \xchaserinit. In all the examples, the system starts at rest.}
    \label{Fig:Results3}
\end{figure}

\begin{figure}[t]
    \centering 
    \subfloat[Square, fixed initial state, variation of the initial reference direction]{
        \includegraphics[width=0.97\linewidth]{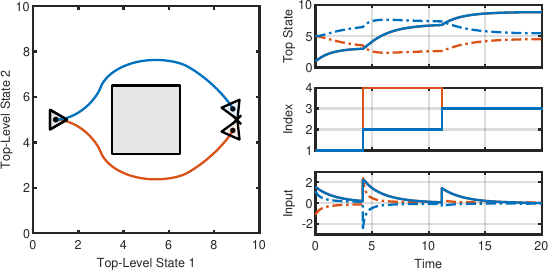}}\\
    \subfloat[Triangle, variation of the initial state]{
        \includegraphics[width=0.97\linewidth]{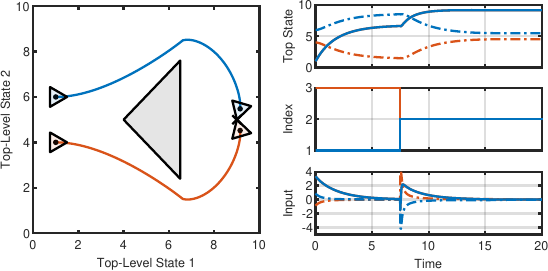}}
    \caption{Examples of top-level system trajectories and the corresponding temporal 
    profiles for the planar unicycle model under the hybrid control law from Section 
    \ref{Sec:HybridController} with the coordinate transformation defined in \eqref{Eq:UnicycleCT}. 
    The left-hand plots show system trajectories obtained with $\mu = 0.2$ and 
    $\sigma = 0.1$. The plots on the right display the respective time evolution of the original 
    position $\mathbf{x}$, the input, and the index of the active safe half-space. The original 
    control point $\mathbf{x}$ corresponds to the center of the triangles depicted in the figure, 
    denoted as \xtarget, and the new control point $\mathbf{x}'$ corresponds to the front vertex of 
    the triangles. The desired equilibrium position is indicated by \xchaserinit.}
    \label{Fig:Results4}
\end{figure}


\subsection{Unicycle Model}

Ultimately, we demonstrate how the proposed hybrid control strategy can be applied to the planar 
unicycle model, which is a nonlinear model that does not directly fall into the classes of systems
considered in this paper.

The planar unicycle model is defined by
\begin{equation}
    \begin{aligned}
        \Dot{\mathbf{x}} &= v(\cos(\theta), \sin(\theta)),\\
        \Dot{\theta} &= \omega,
    \end{aligned}
\end{equation}
with position $\mathbf{x} \in \mathbb{R}^2$, heading angle $\theta \in \mathbb{R}$, and control 
input $\mathbf{u} = (v, \omega) \in \mathbb{R}^2$ consisting of the linear velocity $v$ and angular 
velocity $\omega$. As can be noticed, the unicycle model exhibits a mixed relative degree with 
respect to $\mathbf{x}$ since the control input enters the system dynamics at different levels, and 
it does not directly conform to the system classes considered throughout the paper. Nevertheless, 
we can overcome this by following the strategy adopted in \cite{glotfelter2019hybrid}, which 
considers the control of a point $\mathbf{x}'$ located ahead of $\mathbf{x}$ along the heading 
direction. Specifically, this corresponds to defining the new control point $\mathbf{x}'$ as
\begin{equation}
    \mathbf{x}' = \mathbf{x} + l(\cos(\theta), \sin(\theta)),
    \label{Eq:UnicycleCT}
\end{equation}
where $l \in \mathbb{R}_{>0}$ determines the offset distance. As a result, the system dynamics 
relative to $\mathbf{x}'$ are given by
\begin{equation}
    \Dot{\mathbf{x}}' = \mathbf{R}(\theta)\mathbf{L}\mathbf{u},
    \label{Eq:Unicycle}
\end{equation}
where the function $\mathbf{R}: \mathbb{R} \rightarrow \mathrm{SO}(2)$ is defined as
\begin{equation}
    \mathbf{R}(\theta) =
    \begin{bmatrix}
        \cos(\theta) & -\sin(\theta)\\
        \sin(\theta) & \cos(\theta)
    \end{bmatrix}
\end{equation}
for all $\theta \in \mathbb{R}$, and the matrix $\mathbf{L} \in \mathbb{R}^{2\times 2}$ is defined 
as
\begin{equation}
    \mathbf{L} =
    \begin{bmatrix}
        1 & 0\\
        0 & l
    \end{bmatrix}.
\end{equation}
Thus, with this transformation, we have ended up with a first-order parameter-varying system in 
\eqref{Eq:Unicycle}, where $\mathbf{R}(\theta)\mathbf{L}$ has full row rank for all 
$\theta \in \mathbb{R}$. Therefore, the hybrid control strategy from Section 
\ref{Sec:HybridController} can be applied to the system dynamics \eqref{Eq:Unicycle}. To ensure 
safety with respect to the original position, the active CBF $h_q$ must consider the offset 
distance and be defined as
\begin{equation}
    h_q(\mathbf{x}') = \mathbf{n}_q^\top\mathbf{x}' - d_q - l.
\end{equation}

Fig. \ref{Fig:Results4} revisits two earlier examples, this time applied to the unicycle model 
using the change of coordinates from \eqref{Eq:UnicycleCT}. Due to the nature of the unicycle 
model, the resulting trajectories

\newpage

\noindent are more curvilinear compared to previous systems. Moreover, the original point 
$\mathbf{x}$ (the triangle's center) does not converge exactly to the desired equilibrium point. 
Instead, it converges to a ball of radius equal to the offset distance. This occurs because the new 
control point $\mathbf{x}'$ (the front vertex of the triangle) is the one converging exactly to the 
target. Nevertheless, the conservativeness of this approach can be reduced by selecting a smaller 
offset distance.


\section{Conclusion} \label{Sec:Conclusion}

This paper introduces a hybrid CLF-CBF control framework with global asymptotic stabilization 
properties, overcoming the limitations concerning deadlocks found in the standard CLF-CBF-based 
framework. The proposed solution provides a more flexible and systematic design approach than 
current alternatives available in the literature, ensuring global asymptotic stabilization and 
safety across any bounded convex polytopic avoidance domain. The approach is further 
extended to higher-order systems via a joint CLF-CBF backstepping procedure.

Avenues for further research include extending this method to handle unsafe regions composed of 
multiple polytopes and time-varying unsafe sets. Moreover, an experimental validation with a 
vehicle could be a valuable next step to gather real-world data supporting the method's 
effectiveness.


\bibliographystyle{ieeetr}
\bibliography{Refs}

\begin{IEEEbiography}[
{\includegraphics[width=1in,height=1.25in,keepaspectratio]{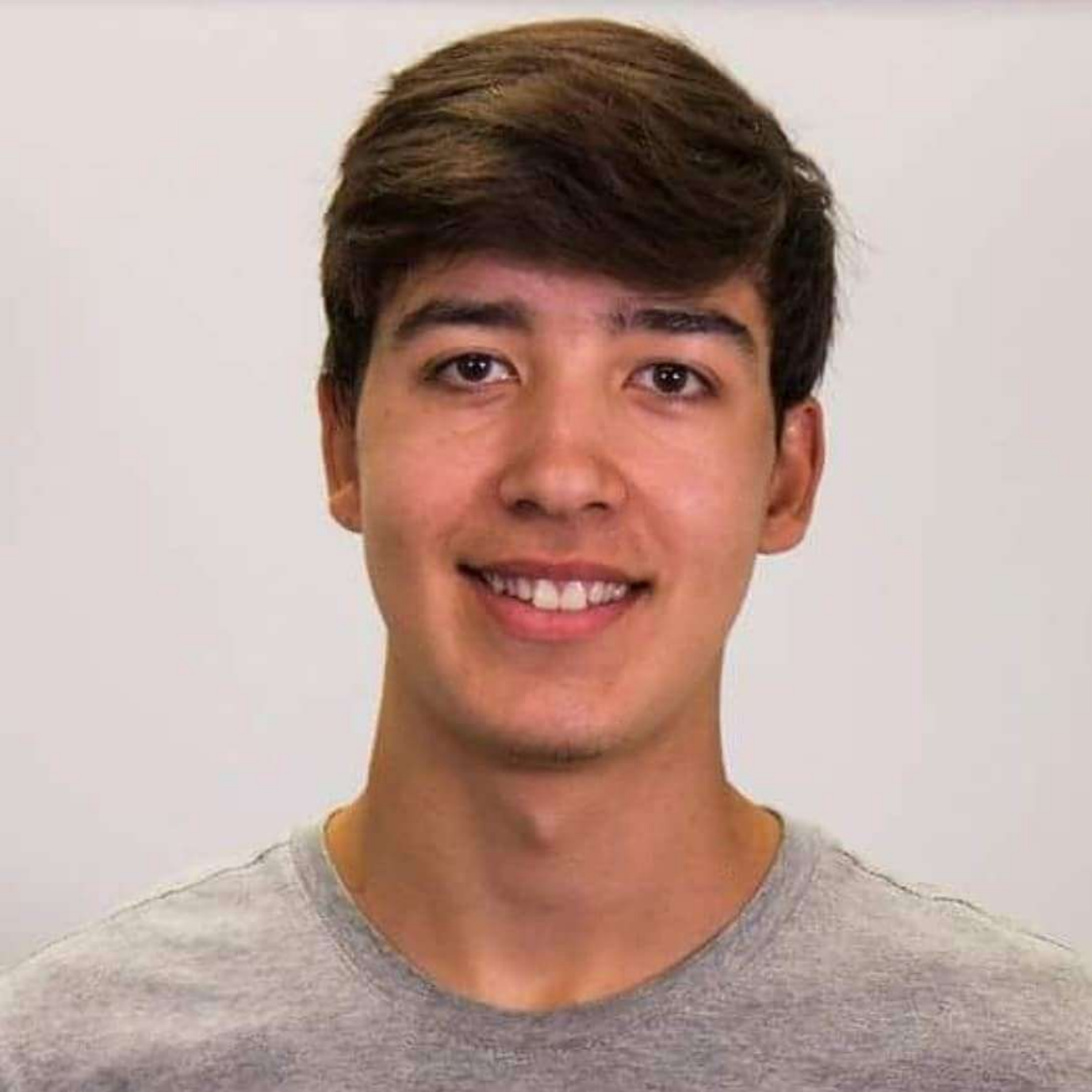}}]
{Hugo Matias} received his B.Sc. and M.Sc. degrees in Electrical and Computer Engineering from the 
Instituto Superior Técnico (IST), Lisbon, Portugal, in 2021 and 2023, respectively. His main area 
of specialization is Control, Robotics and Artificial Intelligence, and his second area of 
specialization is Networks and Communication Systems. He is pursuing a Ph.D. in Electrical and 
Computer Engineering, with a specialization in Systems, Decision and Control, at the School of 
Science and Technology from the NOVA University of Lisbon, Costa da Caparica, Portugal, and he is 
also conducting his research at the Institute for Systems and Robotics (ISR), LARSyS, Lisbon, 
Portugal. His research interests span the fields of nonlinear control and optimization, filtering 
and estimation, and distributed systems.
\end{IEEEbiography}

\vspace{-1mm}

\begin{IEEEbiography}[
{\includegraphics[width=1in,height=1.25in,keepaspectratio]{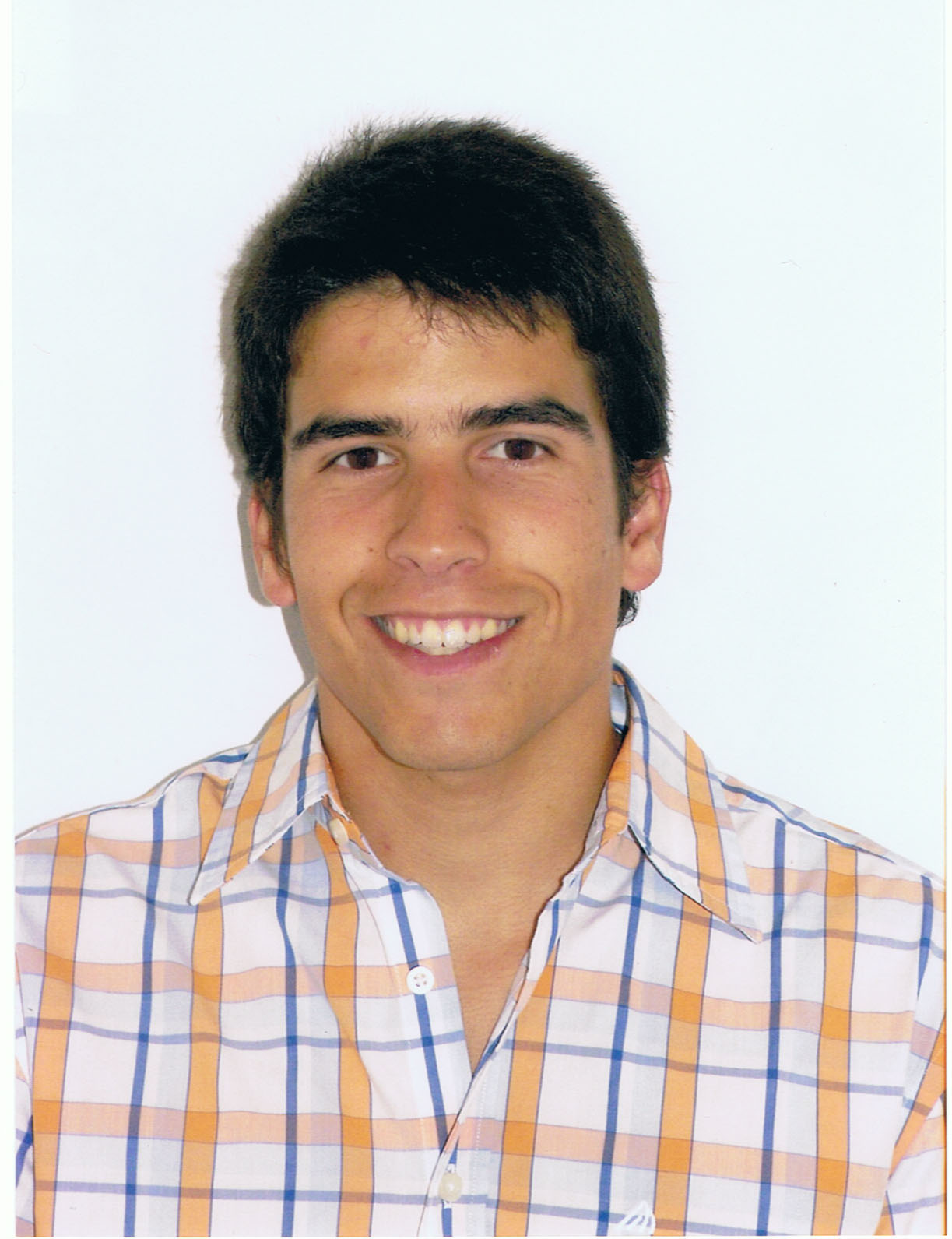}}]
{Daniel Silvestre} received his B.Sc. in Computer Networks in 2008 from the Instituto 
Superior Técnico (IST), Lisbon, Portugal, and his M.Sc. in Advanced Computing in 2009 from the 
Imperial College London, London, United Kingdom. In 2017, he received his Ph.D. (with the highest 
honors) in Electrical and Computer Engineering from the former university. Currently, he is with 
the School of Science and Technology from the Nova University of Lisbon and also with the Institute 
for Systems and Robotics at the Instituto Superior Técnico in Lisbon (PT). His research interests 
span the fields of fault detection and isolation, distributed systems, guaranteed state estimation, 
computer networks, optimal control and nonlinear optimization.
\end{IEEEbiography}

\end{document}